\newtheorem{theorem}{Theorem}
\newtheorem{lemma}{Lemma}
\newtheorem*{lemma*}{Lemma}
\newtheorem{corollary}[lemma]{Corollary}
\newtheorem{remark}{Remark}
\newtheorem{claim}[lemma]{Claim}
\newtheorem{definition}{Definition}
\newcommand{\namedref}[2]{\hyperref[#2]{#1~\ref*{#2}}}
\newcommand{\sectionref}[1]{\namedref{Section}{#1}}
\newcommand{\theoremref}[1]{\namedref{Theorem}{#1}}
\newcommand{\lemmaref}[1]{\namedref{Lemma}{#1}}
\newcommand{\tableref}[1]{\namedref{Table}{#1}}
\newcommand{\ddim}{{\rm ddim}}
\newcommand{\E}{{\mathbb{E}}}
\newcommand{\N}{\mathbb{N}}
\newcommand{\R}{\mathbb{R}}
\newcommand{\poly}{{\rm poly}}
\newcommand{\argmax}{\rm argmax}
\newcommand{\eps}{\epsilon}
\newcommand{\rt}{rt}
\newcommand{\bp}{\rm BP}
\newcommand{\slt}{\rm SLT}
\newcommand{\hd}{hop-diameter }
\newcommand{\mL}{\mathcal{L}}
\newcommand{\F}{\mathcal{F} }
\newcommand{\D}{\Delta}
\def\inline#1:{\par\vskip 7pt\noindent{\bf #1:}\hskip 10pt}
\def\inline#1:{\par\vskip 7pt\noindent{\bf #1:}\hskip 10pt}
\def\blackslug{\hbox{\hskip 1pt \vrule width 4pt height 8pt
		depth 1.5pt \hskip 1pt}}
\def\QED{\quad\blackslug\lower 8.5pt\null\par}
\newcommand{\alert}[1]{\textbf{\color{red}
[[[#1]]]}\marginpar{\textbf{\color{red}**}}\typeout{ALERT:
\the\inputlineno: #1}}
\providecommand{\algorithmname}{Algorithm}
\definecolor{forestgreen}{rgb}{0.13, 0.55, 0.13}
\begin{document}
\date{}
\author{Michael Elkin\thanks{Supported in part by ISF grant (724/15).}}
\author{Arnold Filtser\thanks{Supported in part by ISF grant (1817/17) and in part by BSF grant 2015813}}
\author{Ofer Neiman\thanks{Supported in part by ISF grant (1817/17) and in part by BSF grant 2015813}}
\affil{Ben-Gurion University of the Negev. Email: \texttt{\{elkinm,arnoldf,neimano\}@cs.bgu.ac.il}}
\title{Distributed Construction of Light Networks}
\maketitle
\begin{abstract}
A $t$-{\em spanner} $H$ of a weighted graph $G=(V,E,w)$ is a subgraph that approximates all pairwise distances up to a factor of $t$. The {\em lightness} of $H$ is defined as the ratio between the weight of $H$ to that of the minimum spanning tree. An $(\alpha,\beta)$-{\em Shallow Light Tree} (SLT) is a tree of lightness $\beta$, that approximates all distances from a designated root vertex up to a factor of $\alpha$. A long line of works resulted in efficient algorithms that produce (nearly) optimal light spanners and SLTs.

Some of the most notable algorithmic applications of light spanners and SLTs are in distributed settings. Surprisingly, so far there are no known efficient distributed algorithms for constructing these objects in general graphs. In this paper we devise efficient distributed algorithms in the CONGEST model for constructing light spanners and SLTs, with near optimal parameters. Specifically, for any $k\ge 1$ and $0<\epsilon<1$, we show a $(2k-1)\cdot(1+\epsilon)$-spanner with lightness $O(k\cdot n^{1/k})$ can be built in $\tilde{O}\left(n^{\frac12+\frac{1}{4k+2}}+D\right)$ rounds (where $n=|V|$ and $D$ is the hop-diameter of $G$). In addition, for any $\alpha>1$ we provide an $(\alpha,1+\frac{O(1)}{\alpha-1})$-SLT in $(\sqrt{n}+D)\cdot n^{o(1)}$ rounds. The running time of our algorithms cannot be substantially improved.

We also consider spanners for the family of doubling graphs, and devise a $(\sqrt{n}+D)\cdot n^{o(1)}$ rounds algorithm in the CONGEST model that computes a $(1+\epsilon)$-spanner with lightness $(\log n)/\epsilon^{O(1)}$. As a stepping stone, which is interesting in its own right, we first develop a distributed algorithm for constructing nets (for arbitrary weighted graphs), generalizing previous algorithms that worked only for unweighted graphs.
\end{abstract}

\section{Introduction}

Let $G=(V,E,w)$ be a graph with edge weights $w:E\to\R_+$. For $u,v\in V$, denote by $d_G(u,v)$ the shortest path distance in $G$ between $u,v$ with respect to these weights. A subgraph $H=(V,E')$ with $E'\subseteq E$ is called a {\em $t$-spanner} of $G$, if for all $u,v\in V$, $d_H(u,v)\le t\cdot d_G(u,v)$. The parameter $t$ is called the {\em stretch} of $H$. The most relevant and studied attributes of a $t$-spanner are its sparsity (i.e., the number of edges $|E'|$), and the total weight of the edges $w(H)=\sum_{e\in E'}w(e)$. 
Since any spanner with finite stretch must be connected, its weight is at least the weight of the Minimum Spanning Tree (MST) of $G$, and the {\em lightness} of $H$ is defined as $\frac{w(H)}{w(MST)}$.

Another useful notion of a subgraph that approximately preserves distances was introduced in \cite{ABP92,KRY95}. Given a weighted graph $G=(V,E,w)$ with a designated root vertex $\rt$, an $(\alpha,\beta)$-{\em Shallow-Light Tree} (SLT) $T$ of $G$ is a spanning tree which has lightness $\beta$, and approximates all distances from $\rt$ to the other vertices up to a factor of $\alpha$.

In this paper we focus on the distributed CONGEST model of computation, where each vertex of the graph $G$ hosts a processor, and these processors communicate with each other in discrete rounds via short messages on the graph edges (typically the message size is $O(\log n)$ bits). We devise efficient distributed algorithms that construct light spanners and shallow-light trees for general graphs, and also light spanners for doubling graphs. See \tableref{fig:table} for a succinct summary.

\begin{table*}
\centering
\small
\begin{tabular}{|c|c|c|c|c|}
	\hline
     Object  &  Distortion   &  Lightness & Size & Run time \\
	\hlineB{3}
Spanner & $(2k-1)\cdot(1+\epsilon)$ & $O(k\cdot n^{1/k})$ & $O(k\cdot n^{1+1/k})$ & $\tilde{O}\left(n^{\frac12+\frac{1}{4k+2}}+D\right)$\\
	\hline
SLT & $1+\frac{O(1)}{\alpha-1}$ & $\alpha$ & NA & $\tilde{O}\left(\sqrt{n}+D\right)\cdot\poly\left(\frac{1}{\alpha-1}\right)$\\
\hline
$(\gamma,\beta)$-net & NA & NA & NA & $(\sqrt{n}+D)\cdot 2^{\tilde{O}(\sqrt{\log n}\cdot\log\frac{\beta}{\gamma-\beta})}$\\
\hline
Spanner & $1+\epsilon$ &  $\eps^{-O(\ddim)}\cdot \log n$  & $n\cdot \eps^{-O(\ddim)}\cdot\log n$ &  $(\sqrt{n}+D)\cdot\eps^{-\tilde{O}(\sqrt{\log n}+\ddim)}$\\
\hline
\end{tabular}
\caption{A summary of our main results. Here $n$ is the number of vertices, $k\ge 1$ and $\alpha\ge 1 $ are parameters and $0<\epsilon<1$ is a constant. For the nets $\gamma>\beta>0$. All results are in the CONGEST model.}\label{fig:table}
\end{table*}

\subsection{Light Spanners for General Graphs}

Spanners are a fundamental combinatorial object. They have been extensively studied and have found numerous algorithmic applications \cite{A85,PS89,PU89,ADDJS93,Coh98,ACIM99,EP01,BS07,E06,EZ06,TZ06,Pet09,DGPV08,P10,MPVX15,AB16,EN17}. The basic greedy algorithm \cite{ADDJS93}, for a graph with $n$ vertices and any integer $k\ge 1$, provides a $(2k-1)$-spanner with $O(n^{1+1/k})$ edges, which is best possible (assuming Erdos' girth conjecture). Spanners of low weight have received much attention in recent years \cite{CDNS95,ES16,ENS15,G15,CW18,BFN16,ADFSW17,BLW17,FN18,BLW19}, and are particularly useful in a distributed setting;  efficient broadcast protocols, network synchronization and computing global functions \cite{ABP90,ABP92}, network design \cite{MP98,SCRS01} and routing \cite{WCT02} are a few examples. The state-of-the-art is a $(2k-1)\cdot(1+\epsilon)$-spanner of \cite{CW18} with lightness $O(n^{1/k})$, for any constant $0<\epsilon<1$. In \cite{FS16} it was shown that the greedy algorithm is existentially optimal, hence it also achieves such lightness.

The greedy algorithm provides a satisfactory answer to the existence of sparse and light spanners, but not to efficiently producing such a spanner (because the greedy algorithm has inherently large running time). Indeed, the problem of devising fast algorithms to construct spanners is very important in some algorithmic applications. For light spanners, \cite{ES16} showed a near-linear time algorithm that constructs a $(2k-1)\cdot(1+\epsilon)$-spanner with $O(k\cdot n^{1+1/k})$ edges and lightness $O(k\cdot n^{1/k})$. The sparsity and lightness were improved (still in near-linear time) in \cite{ADFSW17} to $O(\log k\cdot n^{1+1/k})$ and $O(\log k\cdot n^{1/k})$ respectively. In the distributed setting, \cite{BS07} devised a randomized algorithm for a $(2k-1)$-spanner with $O(k\cdot n^{1+1/k})$ edges in $O(k)$ rounds in the CONGEST model. This was recently improved for unweighted graphs by \cite{MPVX15,EN17spanner} to $O(n^{1+1/k})$ edges.
 However, the weight of  these spanners is not bounded. Surprisingly, none of the previous works in the CONGEST model has a bound on the {\em lightness} of spanners for general graphs.

\paragraph{Our results.}
Unlike the sparsity of spanners, which can be preserved via a local algorithm, the lightness is a global measure. Indeed, we observe that the lower bound of \cite{SHKKNPPW12} on the number of rounds required for any polynomial approximation of the MST weight, implies a lower bound for computing light spanners. In particular, for a graph with $n$ vertices and hop-diameter\footnote{The hop diameter of a weighted graph is the diameter of the underlying unweighted graph.} $D$, any CONGEST algorithm requires at least $\tilde{\Omega}(\sqrt{n}+D)$ rounds for computing a light spanner (with any polynomial lightness).\footnote{The notations $\tilde{O}(\cdot)$ and $\tilde{\Omega}(\cdot)$ hide polylogarithmic factors.}

We provide the first algorithm with sub-linear number of rounds for constructing light spanners for general graphs in the CONGEST model. Specifically, for any integer parameter $k\ge 1$ and constant $0<\epsilon<1$, we devise a randomized algorithm that w.h.p. outputs a $(2k-1)\cdot(1+\epsilon)$-spanner with $O(k\cdot n^{1+1/k})$ edges and lightness $O(k\cdot n^{1/k})$, within $\tilde{O}\left(n^{\frac12+\frac{1}{4k+2}}+D\right)$ rounds in the CONGEST model, thus nearly matching the lower bounds.

\subsection{Shallow-Light Trees}

Shallow-Light trees are widely used for various distributed tasks, such as network design, broadcasting in ad-hoc networks and multicasting \cite{PV04,SDS04,YCC06}.
In \cite{KRY95}, an optimal tradeoff between the lightness of the SLT to the stretch of the root distances was obtained. Specifically, for any $\alpha>1$ they obtained an SLT with lightness $\alpha$ and stretch $1+\frac{2}{\alpha-1}$. In addition, \cite{KRY95} exhibited an efficient algorithm for constructing such a tree in near-linear time, and also in $O(\log n)$ rounds in the PRAM (CREW) model. However, their techniques are inapplicable to the CONGEST model, and it remained an open question whether an SLT can be built efficiently in this model. (Roughly speaking, \cite{KRY95} used pointer jumping techniques that require communication between non-adjacent vertices, hence this is unsuitable for the CONGEST model.)

\paragraph{Our result.} Here we answer this question, and devise a distributed deterministic algorithm, that for any $\alpha>1$, outputs an SLT with lightness $\alpha$ and stretch $1+\frac{O(1)}{\alpha-1}$, within $\tilde{O}\left(\sqrt{n}+D\right)\cdot\poly\left(\frac{1}{\alpha-1}\right)$ rounds. Once again, any distributed SLT algorithm must take at least $\tilde{\Omega}(\sqrt{n}+D)$ rounds \cite{E04,SHKKNPPW12}. Thus our result is nearly optimal.

\subsection{Light Spanners for Doubling Graphs}

A graph $G$ has \emph{doubling dimension} $\ddim$ if for every vertex $v\in V$ and radius $r>0$, the ball\footnote{A ball is defined as $B_G(v,r)=\{u\in V~:~d_G(u,v)\le r\}$.} $B_G(v,2r)$ can be covered by $2^\ddim$ balls of radius $r$. For instance, a $d$-dimensional $\ell_p$ space has $\ddim=\Theta(d)$, and every graph with $n$ vertices has $\ddim=O(\log n)$.
This is a standard and well-studied notion of "growth restriction" on a graph \cite{Ass83,GKL03,HPM06}, and it is believed that such graphs occur often in real-life networks and data \cite{TSL00,NZ02}. One notable motivation for light spanners in doubling graphs\footnote{A graph family is called {\em doubling} if its members have constant doubling dimension.} is their application for polynomial approximation schemes for the traveling salesperson and related problems (see, e.g., \cite{K05,G15}). While spanners with $1+\epsilon$ stretch and constant lightness have been known to exists in low dimensional Euclidean space for a while \cite{DHN93,ADDJS93}, only recently such $(1+\epsilon)$-spanners with constant lightness $(\ddim/\eps)^{O(\ddim)}$ have been discovered for doubling graphs \cite{G15}. The lightness was improved by \cite{BLW19} to the optimal $(1/\eps)^{O(\ddim)}$.

In the distributed LOCAL model\footnote{The LOCAL model is similar to CONGEST, but the size of messages is not bounded.}, \cite{DPP06} devised light spanners for a certain graph family, called {\em unit ball graphs in a doubling metric space}\footnote{A unit ball graph is a graph whose vertices lie in a metric space, and edges connect vertices of distance at most 1. In this scenario the metric is doubling.}. Specifically, they showed an $O(\log^*n)$ rounds algorithm for a $(1+\epsilon)$-spanner with lightness $(1/\eps)^{O(\ddim)}\cdot\log\Lambda$, where $\Lambda$ is the aspect ratio of $G$ (the ratio between the largest to smallest edge weights). We note that to obtain such a low number of rounds, they imposed restrictions on both the distributed model and the graph family.

Essentially all spanners for doubling graphs use {\em nets} in their construction. An {\em $(\alpha,\beta)$-net} of a graph is a set $N\subseteq V$ which is both $\alpha$-covering: for all $u\in V$ there is $v\in N$ with $d_G(u,v)\le\alpha$, and $\beta$-separated: for all $x,y\in N$, $d_G(x,y)>\beta$. The standard definition of a net is when $\alpha=\beta$, but we shall allow $\alpha>\beta$ as well. The usefulness of nets in doubling graphs stems from the fact that any net restricted to a ball of certain radius has a small cardinality. While a simple greedy algorithm yields a net, it is not suitable for distributed models due to it being inherently sequential.

A {\em ruling set} is a net in an unweighted graph. There have been several works that compute a ruling set in distributed settings.  In \cite{AGLP89}, a deterministic algorithm for a $(k\log n,k)$-ruling set running in $O(k\log n)$ rounds was developed, and a tradeoff extending this result was shown in \cite{SEW13}. A consequence of the work of \cite{AGLP89} provides a $(k,k)$-ruling set computed within $k\cdot 2^{\tilde{O}(\sqrt{\log n})}$ rounds. A randomized algorithm for a $(k,k)$-ruling set was given in \cite{L86} with $O(k\log n)$ rounds, and the running time was improved for graphs of small maximum degree in \cite{BEPS12,Gh16}.
However, all these results apply only for unweighted graphs. The problem of efficiently constructing a net in distributed models remained unanswered.

\paragraph{Our results.}
We design a randomized distributed algorithm, that for a given graph with $n$ vertices and hop-diameter $D$ and any $0<\beta< \alpha< 2\beta$, w.h.p. finds an $(\alpha,\beta)$-net within $(\sqrt{n}+D)\cdot 2^{\tilde{O}(\sqrt{\log n}\cdot\log\frac{\beta}{\alpha-\beta})}$ rounds in the CONGEST model. We show that the running time must be at least $\tilde{\Omega}(\sqrt{n}+D)$ for general graphs, via a reduction to the problem of approximating the MST weight.  So our running time is best possible (up to lower order terms). However, we do not know if a faster algorithm is achievable when the input graph has a constant doubling dimension.

Then, we utilize this algorithm for constructing nets, and devise a randomized algorithm that for a graph with doubling dimension $\ddim$ and any $0<\epsilon<1$, w.h.p. produces a $(1+\epsilon)$-spanner with lightness $\eps^{-O(\ddim)}\cdot \log n$ in $(\sqrt{n}+D)\cdot\eps^{-\tilde{O}(\sqrt{\log n}+\ddim)}$ rounds.

\subsection{Overview of Techniques}
In this section we provide an overview of the algorithms, techniques and ideas used in the paper. For the sake of brevity, some parts are over-simplified, or even completely neglected.

\paragraph{Eulerian Tour of the MST.}
Let $T$ be the MST. The first step in both our constructions of an SLT and a light spanner for general graphs is a distributed computation of a DFS traversal $\mL$ of $T$.
As an outcome of this computation, each vertex knows all its visiting times in $\mL$.
Our algorithm is a simplification of a similar algorithm from \cite{EN18}.

In order to compute $\mL$, we note that the distributed MST algorithm of \cite{KP98} induces a partition of $T$ into $O(\sqrt{n})$ fragments, each with hop-diameter $O(\sqrt{n})$.
We then create a virtual tree $T'$ whose vertices are the fragments. As $T'$ has only $O(\sqrt{n})$ vertices, it is possible to broadcast $T'$ to the entire graph.
We first compute a DFS tour locally in each fragment, and broadcast the $O(\sqrt{n})$ lengths of these tours. We use this information and the known structure of $T'$ to globally compute the DFS visit times for the "roots'' of the fragments. Finally, locally in each fragment, we extend this to a traversal of the entire tree $T$.

\paragraph{Shallow Light Tree (SLT).}
An SLT is a combination of the MST $T$ and a shortest path tree (SPT) rooted in $\rt$. As currently the fastest known  exact SPT algorithms \cite{GL18,E17} require more than $\tilde{O}(\sqrt{n} + D)$ rounds, we use instead an approximate SPT , $T'$.
Our basic strategy (following \cite{ABP92, KRY95}) is to choose a subset of vertices called break points (\bp).
Then we construct a subgraph $H$ by taking $T$, and adding to $H$ the unique path in $T'$ from $\rt$ to every break point $v\in\bp$. The SLT is computed as yet another approximate SPT rooted in $\rt$, but now using $H$ edges only.

Let $\mL=\{x_0,x_1,\dots,x_{2n-2}\}$ be an  Eulerian traversal of the MST $\mL$ (each vertex may appear several times).
Ideally, we would like to choose $\bp=\{x_0,x_{i_1},x_{i_2},\dots\}$ such that (1) every pair of consecutive points $x_{i_j},x_{i_{j+1}}\in\bp$ is far, specifically $d_{\mL}(x_{i_j},x_{i_{j+1}})>\eps\cdot d_G(\rt,x_{i_{j+1}})$, and (2) every node $x_q\in\mL$ has a nearby break point $x_{i_j}\in\bp$, specifically $d_{\mL}(x_{i_j},x_{q})\le \eps\cdot d_G(\rt,x_{q})$.
The first condition is used to bound the lightness, while the second condition is used to bound the stretch.

The choice of \bp~ described above can be easily performed in a greedy manner by sequentially traversing the nodes in $\mL$.  Unfortunately, we cannot implement this sequential algorithm efficiently in a distributed manner.
Instead, we break $\mL$ into $O(\sqrt{n})$ intervals, each containing at most $\sqrt{n}$ nodes. We add the first node in each interval to a temporary break point set $\bp'$. Using these temporary break points as an anchor, we perform the  sequential algorithm simultaneously in all intervals, and add (permanent) break points. Finally, we broadcast $\bp'$ to $\rt$, which performs a local computation in order to sparsify this set. Specifically, it chooses a subset of $\bp'$ to serve as  permanent break points  using the sequential algorithm, and broadcasts the chosen break points to the entire network. Intuitively, we are building a separate SLT for the set $\bp'$, which filters out some of its points.
Our analysis shows that this two-step choice of break points loses only a constant factor in the lightness.
However, this constant factor loss implies that obtaining the full tradeoff (i.e.,  lightness close to 1) cannot be done trivially as before. To remedy this, we apply a reduction from \cite{BFN16} adapted to the CONGEST model.

\paragraph{Light Spanner for General Graphs.}
Our basic approach is similar to the algorithms of \cite{CDNS95,ES16,ENS15}. We divide the graph edges into $O(\log n)$ buckets, according to their weight. Denote by $L$ the weight of the MST multiplied by 2 (for technical reasons).
In the lowest level, we have all the edges of weight at most $L/n$. For this bucket we simply use the distributed spanner of \cite{BS07} for weighted graphs. Even though the algorithm of \cite{BS07} provides an upper  bound only the sparsity of the spanner, we can use their construction as the weight of the edges in this bucket is sufficiently small.

Consider the $i$-th bucket $E_i$, where all edges have weight in $\left(\frac{L}{(1+\eps)^{i+1}},\frac{L}{(1+\eps)^{i}}\right]$.
We use the MST traversal $\mL$ to divide the graph into $O(\frac{(1+\eps)^i}{\eps})$ clusters of diameter $\frac{\eps\cdot L}{(1+\eps)^{i}}$.
Next, define an unweighted cluster graph $\mathcal{G}_i$ whose vertices are the clusters, and inter-cluster edges are taken only from $E_i$. Intuitively, the diameter of clusters is an $\epsilon$-fraction of the edge weights, so using the MST edges to travel inside clusters will increase the distance by at most a $(1+\epsilon)$ factor.

We then simulate the spanner algorithm of \cite{EN17spanner} for unweighted graphs on ${\cal G}_i$, and obtain a spanner ${\cal H}_i$. For every edge $e\in \mathcal{H}_i$, we add a corresponding edge $e'\in E_i$ to the final spanner. The main technical part is this simulation, which is difficult since the communication graph $G$ is not the graph ${\cal G}_i$ for which we want a spanner. We distinguish between small and large clusters; for the former we use $\mL$ to pipeline information inside the clusters, while for the latter we convergecast all the relevant information to a single vertex, and then broadcast the decisions made by this vertex to the entire graph. For this approach to be efficient we need to refine the partition to clusters, so that small clusters will have bounded hop-diameter, and also ensure there are few large clusters, to bound the convergecast and broadcast time.

\paragraph{Net Construction.}

Our algorithm for an $(\alpha,\beta)$-net imitates, on a high level, previous ruling sets algorithms (like \cite{L86,MRSZ11}).\footnote{In fact, these papers showed algorithms for Maximal Independent Sets (MIS), but a $(k,k)$-ruling set is an MIS for the graph $G^k$.} Ideally, the net construction works as follows. Initially, all vertices are active. In each round, sample a permutation $\pi$. Each (active) vertex $v$ which is the first in the permutation with respect to (w.r.t.) its $\beta$-neighborhood joins the net. Every vertex for which some vertex from its $\alpha$-neighborhood joined the net, becomes inactive. Repeat until all vertices become inactive.

In order to check whether a vertex is the first in the permutation w.r.t. its $\beta$-neighborhood, we use Least Element (LE) lists \cite{Coh97}. Given a permutation $\pi$, a vertex $u$ belongs to the LE list of a vertex $v$, if  $u$ is the first in the permutation among all the vertices at distance at most $d_G(v,u)$ from $v$. In particular, given the LE list of $v$, we can check whether it should join the net or not.
Efficient distributed computation of an LE list is presented in \cite{FL16}. However, rather than computing the list w.r.t. the graph $G$, \cite{FL16} compute LE list w.r.t. an auxiliary graph $H$ that approximates $G$ distances up to a $1+\eps$ factor. Fortunately, we can cope with the approximation by taking $\alpha>(1+\eps)\beta$.
Once we compute the lists and choose which vertices will be added to the net, we compute an (approximate) shortest path tree rooted in the net points. All vertices at distance at most $\alpha$ from net points become inactive. This concludes a single round. After $O(\log n)$ rounds all vertices become inactive w.h.p.. The running time is dominated by the LE lists computations.

We also provide a lower bound on the number of rounds required to construct a net, by a reduction to the problem of approximating the weight of an MST. 

\paragraph{Light Spanner for Doubling Metrics.}

The basic idea for constructing spanners for doubling metrics is quite simple and well known. For every distance scale $\D$, construct an $(\alpha,\beta)$-net $N_\D$ where $\alpha,\beta\approx \eps\D$, and connect by a shortest path every pair of net points at distance at most $\D$. The stretch bound follows standard arguments, based on the covering property of nets. To prove lightness, we will use a packing argument, stating that every net point has at most $\eps^{-O(\ddim)}$ other net points at distance $\D$, and every net point must contribute to the MST weight at least $\eps\cdot\D$.

The main issue is implementing this algorithm efficiently in the CONGEST model. An efficient distributed construction of nets was already described above, and the remaining obstacle is to connect nearby net points. The problem is that the shortest path between nearby net points may contain many vertices, and we cannot afford to add these sequentially. We resolve this issue by conducting a $\D$-bounded multi-source approximate shortest paths (from each net point) based on {\em hopsets}. Roughly speaking, a hopset is a set of (virtual) edges added to the graph, so that every pair has an approximate shortest path containing few edges. We use the path-reporting hopsets of \cite{EN16}, so that the actual paths are added to the spanner. The running time is indeed bounded: we use the packing property of nets to show that every vertex participates in a bounded number of such approximate shortest path computations.

\subsection{Organization}
In \Cref{sec:traversal} we devise an Eulerian traversal of the MST, which will be used in the following sections. In \Cref{sec:SLT} we present our distributed construction of an SLT. In \sectionref{sec:spanner} we show our light spanners for general graphs. The construction of nets for general graphs is shown in \sectionref{sec:nets}, and their application to light spanners for doubling graphs is in \sectionref{sec:doubling}. Finally, the lower bounds are in \sectionref{sec:lower}.

\section{Preliminaries}

Let $G=(V,E,w)$ be a weighted graph with $n$ vertices, and let $d_G$ be the induced shortest path metric with respect to the weights. We assume that the minimal edge weight is 1, and that the maximal weight is $\poly(n)$. For $v\in V$ denote by $N(v)=\{u\in V~:~ \{u,v\}\in E\}$ its set of neighbors, and by $N^+(v)=N(v)\cup\{v\}$.
For a set $C\subseteq V$, the induced graph on $C$ is $G[C]$. The {\em weak diameter} of $C$ is $\max_{u,v\in C}\{d_G(u,v)\}$ and its strong diameter is $\max_{u,v\in C}\{d_{G[C]}(u,v)\}$. The hop-diameter of $G$ is defined as its diameter while ignoring the weights.

In the CONGEST model of distributed computation, the graph $G$ represents a network, and every vertex initially knows only the edges incident on it. Communication between vertices occurs in synchronous {\em rounds}. On every round, each vertex may send a small message to each of its neighbors. Every message has size at most $O(\log n)$ bits. The time complexity is measured by the number of rounds it takes to complete a task (we assume local computation does not cost anything). Often the time depends on $n$, the number of vertices, and $D$, the {\em hop-diameter} of the graph.
The following lemma formalizes the broadcast ability of a distributed network (see, e.g., \cite{P00}).
\begin{lemma}\label{lem:pipe}
Suppose every $v\in V$ holds $m_v$ messages, each of $O(1)$ words\footnote{We assume a word size is $\log n$ bits.}, for a total of $M=\sum_{v\in V}m_v$. Then all vertices can receive all the messages within $O(M+D)$ rounds.
\end{lemma}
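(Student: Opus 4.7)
The plan is to reduce the all-to-all broadcast problem to a standard two-phase pipelining procedure on a BFS tree of $G$.

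First, I would have the network construct a BFS tree $T$ rooted at an arbitrary vertex $r$; this takes $O(D)$ rounds via flooding from $r$, and every vertex learns its parent and children in $T$. The depth of $T$ is at most $D$. A simple aggregation along $T$ also lets $r$ learn the total count $M=\sum_{v\in V}m_v$ in a further $O(D)$ rounds, which is useful for termination detection.

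The second phase is a convergecast to $r$. Augment each of the $M$ messages with a globally unique priority (say, the pair $(\text{ID}(v),j)$ for the $j$-th message originally at vertex $v$). Every non-root vertex maintains a queue; initially $v$'s queue holds its $m_v$ messages. In every round, each non-root vertex transmits the smallest-priority message in its queue to its parent, which appends the received message to its own queue. By the standard tree-pipelining lemma (see, e.g., \cite{P00}), this scheme delivers all $M$ messages to $r$ within $O(M+D)$ rounds.

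The third phase is the downcast from $r$: once $r$ has collected everything, it sends the messages down $T$ by the same scheme in reverse, each vertex forwarding its messages in priority order to every one of its children (one message per child per round). A symmetric argument yields $O(M+D)$ rounds for every message to reach every vertex. Summing the three phases gives the claimed $O(M+D)$ total. The only non-trivial ingredient is the classical pipelining argument; the underlying intuition is that a message can be delayed only when the edge above it is occupied by a higher-priority message, and a careful charging bounds the total delay by $O(M)$ while the path length contributes $O(D)$. I do not anticipate any obstacle beyond quoting this well-known result.
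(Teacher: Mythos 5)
Your proposal is correct and is exactly the standard BFS-tree convergecast/downcast pipelining argument that the paper invokes by reference to \cite{P00} without spelling out a proof. Nothing further is needed.
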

A Breadth First Search (BFS) tree $\tau$ of $G$ of \hd $D$ (ignoring the weights) can be computed in $O(D)$ rounds.  Since all  our algorithms have a larger running time, we always assume that we have such a tree at our disposal.

\section{Eulerian Tour of the MST}\label{sec:traversal}
\begin{wrapfigure}{r}{0.27\textwidth}
	\begin{center}
		\vspace{-35pt}
		\includegraphics[width=0.25\textwidth]{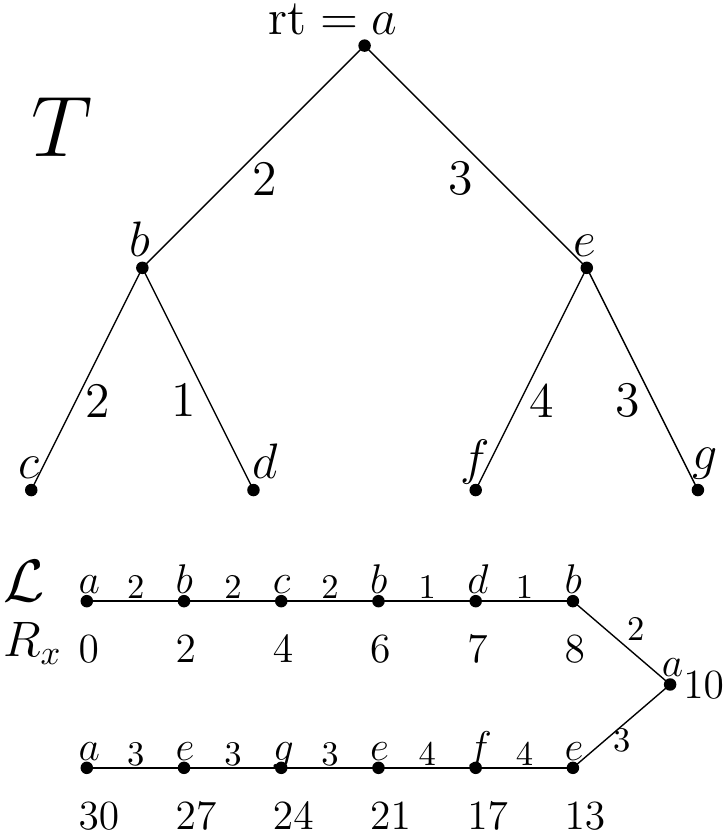}
		\vspace{-7pt}
	\end{center}
	\vspace{-30pt}
\end{wrapfigure}

Let $G=(V,E,w)$ be a weighted graph on $n$ vertices with \hd $D$. Let $T$ be the minimum spanning tree of $G$ with a root vertex $\rt\in V$. We compute an Eulerian path $\mL=\{\rt=x_0,x_1,\dots,x_{2n-2}\}$ drawn by taking a preorder traversal of $T$. The order between the children of a vertex is determined using their id. We remark that in \cite{EN18} it was described how to compute a DFS search of a tree in $\tilde{O}(\sqrt{n}+D)$ rounds. However, that paper also had the property that each vertex uses at most $O(\log n)$ words of memory. We give the full details here for completeness, and also since the presentation is somewhat simpler without the bound on the memory usage.

For a vertex $x\in\mL$, let $R_{x}=d_{\mL}(\rt,x)$ be the time visiting $x$ in $\mL$. The total length of the traversal $\mL$ (that is $R_{x_{2n-2}}$) equals $2\cdot w(T)$. The number of appearances of each vertex $v\in V$ in $\mL$ equals to its degree in $T$ (other that the root $\rt$ who has $\deg(\rt)+1$ appearances).
We will treat each such appearance as a separate vertex. That is $\mL$ is a path graph.
See figure on the right for an illustration.
For a vertex $v\in V$, let $\mL(v)\subseteq \mL$ be the set of appearances of $v$ in $\mL$.
In the remainder of this section we will prove the following lemma, that computes the traversal $\mL$ in $\tilde{O}(\sqrt{n}+D)$ rounds, meaning that each vertex $v$ will know $\mL(v)$ and the visiting time of every vertex $x\in \mL(v)$.

\begin{restatable}[MST traversal]{lemma}{MSTtraversal}\label{lem:MSTtraversal}	
	Let $G=(V,E,w)$ be a weighted graph with $n$ vertices, \hd $D$ and root $\rt\in V$, then there is a deterministic algorithm in the CONGEST model that computes $\mL$ in $\tilde{O}(\sqrt{n}+D)$ rounds.
\end{restatable}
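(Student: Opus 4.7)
The plan is to leverage the MST fragment decomposition produced by the Kutten--Peleg MST algorithm \cite{KP98}, which in $\tilde{O}(\sqrt{n}+D)$ rounds computes $T$ along with a partition of $T$ into $O(\sqrt{n})$ connected fragments, each of hop-diameter $O(\sqrt{n})$. These fragments form a virtual ``fragment tree'' $T'$ with $O(\sqrt{n})$ nodes, whose edges correspond to the MST edges joining distinct fragments. By \lemmaref{lem:pipe}, the entire structure of $T'$ together with $O(\sqrt{n})$ words of per-fragment auxiliary information can be convergecast to (and then broadcast from) a global root along the BFS tree in $\tilde{O}(\sqrt{n}+D)$ rounds.

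Next I would run a local DFS in parallel inside every fragment $F$, treating every inter-fragment MST edge incident to $F$ as an edge to a ``phantom leaf'' that stands for the entire child-fragment subtree hanging off of it. Since $F$ has hop-diameter $O(\sqrt{n})$, this local DFS can be implemented in $O(\sqrt{n})$ rounds: the fragment root (the endpoint in $F$ of the unique MST edge leading to $F$'s parent fragment in $T'$) collects the internal structure of $F$ and broadcasts the chosen visit ordering back. At this point every $v\in F$ knows, for each of its appearances, its \emph{local} visit time $r_v$ (measured from the fragment's hypothetical start time $0$), and the fragment root knows both the total local tour length $\ell_F$ and, for each phantom leaf, the local time at which that leaf is entered.

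Annotating each node $F$ of $T'$ with $\ell_F$ and each edge $(F,F')$ with the local entry time in $F$ of the phantom leaf representing $F'$ yields a tree on $O(\sqrt{n})$ nodes that again, via \lemmaref{lem:pipe}, can be aggregated at a single vertex within $\tilde{O}(\sqrt{n}+D)$ rounds. There, a purely local DFS of $T'$ computes for every fragment $F$ the global time $s_F$ at which the real DFS of $T$ enters $F$; the $O(\sqrt{n})$ values $\{s_F\}$ are then broadcast back in another $\tilde{O}(\sqrt{n}+D)$ rounds.

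Finally, each fragment must convert its local visit times to global ones by accounting for the child-fragment subtrees already traversed at each moment. If $v\in F$ has local visit time $r_v$, then its global visit time is
\[
R_v \;=\; s_F + r_v + \Delta_F(r_v),
\]
where $\Delta_F(r_v)$ is the sum of the full tour lengths $2\,w(T_{F'})$ over all child fragments $F'$ of $F$ whose phantom-leaf local time is at most $r_v$. Since $F$ has at most $O(\sqrt{n})$ children in $T'$ and internal diameter $O(\sqrt{n})$, the fragment root can sort its phantom leaves by local time, compute the cumulative offsets, and disseminate them along $F$'s internal BFS tree in $O(\sqrt{n})$ rounds, after which each $v\in F$ locally outputs the correct $R_x$ for every $x\in\mL(v)$. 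The main technical care will be in the ``phantom leaf'' bookkeeping -- making sure the local DFS inside each fragment respects the global ordering mandated by the preorder traversal of $T$ (consistent id-based tie-breaking across fragment boundaries) and that the resulting offsets are delivered to every vertex without exceeding the $O(\sqrt{n})$ round budget inside the fragment -- but this is a routine pipelining exercise given the hop-diameter bound on each fragment.
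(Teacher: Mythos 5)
Your proposal follows essentially the same route as the paper: partition $T$ into the $O(\sqrt n)$ base fragments of hop-diameter $O(\sqrt n)$ from \cite{KP98}, build and broadcast the virtual fragment tree $T'$, compute local DFS information within fragments, aggregate $O(\sqrt n)$ per-fragment summaries globally to determine each fragment's global entry time, and then convert local to global visit times within each fragment. Your ``cumulative offset'' $\Delta_F(r_v)$ is just a repackaging of the paper's computation of the global subtree tour lengths $g(v)$ and the per-vertex intervals $t_i(v)$, and both bookkeeping schemes are fine.

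The one step I would flag is ``the fragment root collects the internal structure of $F$ and broadcasts the chosen visit ordering back'' in $O(\sqrt n)$ rounds. The guarantee you are given on a base fragment is a hop-diameter bound of $O(\sqrt n)$, not a vertex-count bound; shipping the entire internal tree of $F$ to a single vertex takes $\Omega(|F|)$ rounds even with pipelining, which is not obviously $O(\sqrt n)$. The paper sidesteps this by never centralizing within a fragment: it computes $\ell(v)$ by a bottom-up wave (each vertex sums its children's values and forwards one word to its parent) and assigns the intervals $t_i(v)$ by a top-down wave (each vertex derives and forwards its children's intervals), both of which run in $O(\text{hop-diameter})=O(\sqrt n)$ rounds regardless of $|F|$. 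Your offset scheme can be implemented the same distributed way --- the root only needs to disseminate the $O(\sqrt n)$ phantom-leaf times and the child-subtree tour lengths, which is within budget --- so this is a fixable implementation detail rather than a conceptual error, but as written the round bound for the local DFS is not justified.
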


\subsection{Computing the MST Fragments Tree}\label{sec:fragments}
In \cite{Elk17}, following \cite{KP98}, a deterministic MST construction in the CONGEST model with $\tilde{O}(\sqrt{n}+D)$ rounds was shown. 
The algorithm has two phases, according to the \hd of the fragments. At the end of the first phase, there is a set of $O(\sqrt{n})$ fragments $\F=\{F_1,F_2,\dots\}$, each with \hd $O(\sqrt{n})$. These fragments are called \emph{base fragments}.
The edges added in the first phase are called \emph{internal} edges (as each such edge is internal to some base fragment). In the second phase of the algorithm, the remaining $O(\sqrt{n})$ edges are added and connect the fragments to a tree. We call these edges \emph{external} edges, as they cross between base fragments.

Let $T'$ be a virtual tree with the base fragments $\F$ as vertices, and with the external edges as its edge set (i.e. there is an edge between $F_i$ and $F_j$ if there is an external edge between a vertex in $F_i$ to a vertex in $F_j$).
Since there are $O(\sqrt{n})$ vertices in $T'$, in $O(\sqrt{n}+D)$ rounds we can broadcast $T'$ to all the vertices $V$.
We will think of the MST $T$ as a tree rooted in $\rt$, and of $T'$ as a tree rooted at the fragment $F_1$ containing $\rt$. Using the information above, every vertex in each base fragment $F_i$ can learn the structure of the MST on the fragments $T'$, and infer its parent base fragment $p(F_i)$. If the MST edge connecting $F_i$ to $p(F_i)$ is $(u,v)$ (where $u\in F_i$), then $p(u)=v$, and we set $r_i=u$ to be the root of the base fragment $F_i$. For $F_1$, $r_1=\rt$ will be its root. Set $\mathcal{R}= \{r_1,r_2,\dots\}$ to be the set of base fragment root vertices. See \Cref{fig:LenghtLocalGlobal} for an illustration.

\begin{figure}[]
	\centering{\includegraphics[scale=0.7]{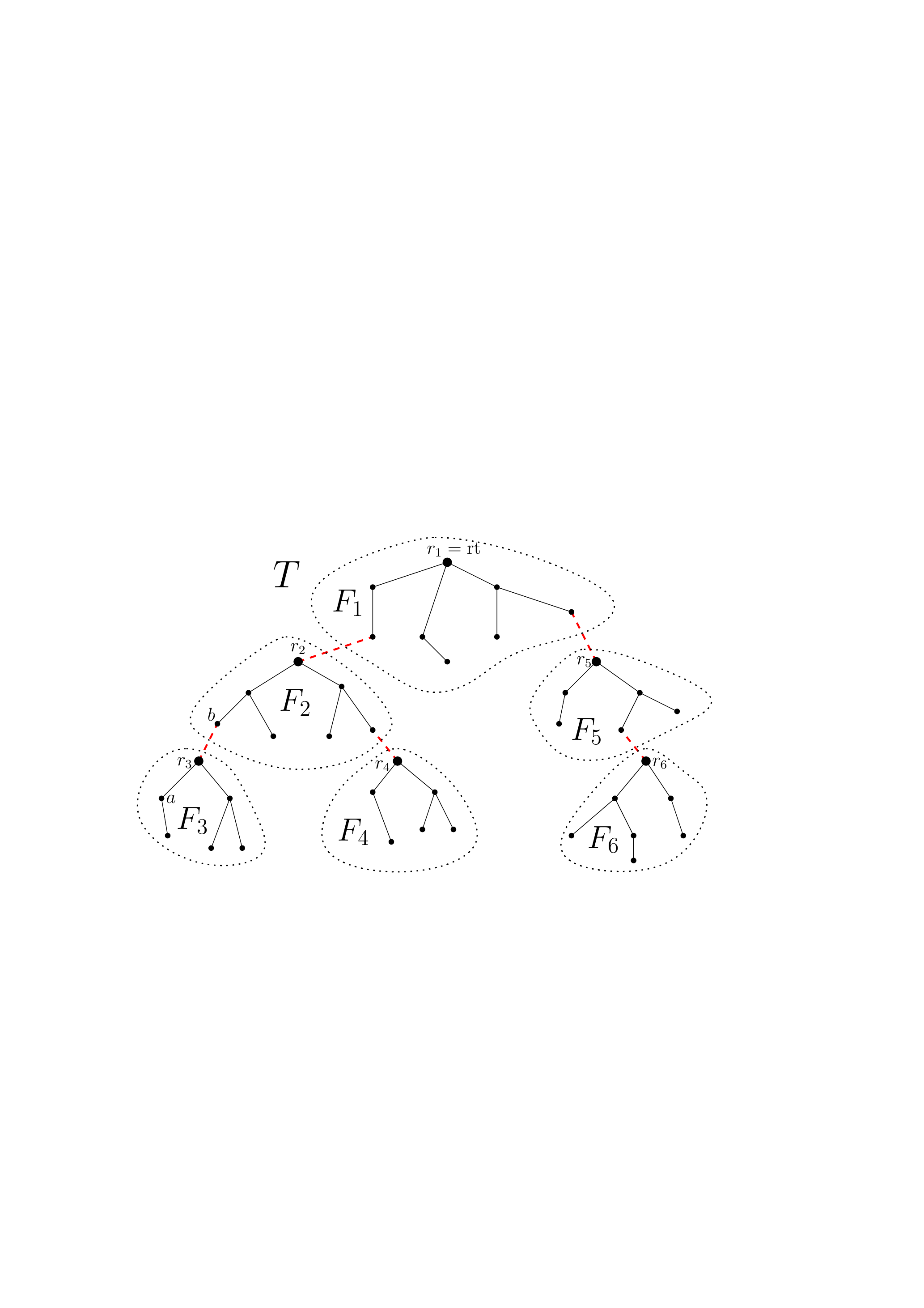}}
	\caption{\label{fig:LenghtLocalGlobal}\small \it
		The tree $T$ in the figure is divided to $6$ base fragments circled by a dotted line. The internal edges are colored black, while the external edges are dashed and colored red.
		The fragment $R_i$ is rooted in $r_i$, a vertex with outgoing edge towards a parent fragment. For example, consider the case where all the edges in $T$ have unit weight. Sample of local lengths: $\ell(a)=2$, $\ell(b)=0$, $\ell(r_1)=14$, and of global lengths $g(a)=2$, $g(b)=12$, $g(r_1)=78$.
	}
\end{figure}

\subsection{Computing Tour Lengths}
For $v\in F_i$, let $\ell(v)$ denote the length of the tour of the subtree of $F_i$ rooted at $v$. This is the local tour length for $v$, which is simply twice the sum of edge weights in that subtree. In addition, denote by $g(x)$ the length of the tour of the subtree of $T$ rooted at $v$, which is the global tour length for $v$. See \Cref{fig:LenghtLocalGlobal} for an example.

The computations of the local tour lengths is done locally in each base fragment, i.e. in all the base fragments in parallel. Consider $F_i$. Initially $\ell(v)=0$ for every leaf $v\in F_i$. Every intermediate vertex $u\in F_i$ that received messages from all its children in $F_i$, denoted $z_1,\dots,z_k$, computes
$\ell(u)=\sum_{i=1}^{k}\left(\ell(z_{i})+2\cdot w(u,z_{i})\right)$,
and sends $\ell(u)$ to its own parent.
Using the bounded \hd of the base fragments, this procedure will terminate in $O(\sqrt{n})$ rounds. When this stage concludes, every $v\in V$ knows $\ell(v)$.

After the computation of the local tour lengths, all the root vertices $\mathcal{R}$ broadcast to the entire graph their local tour lengths $\ell(r_1),\ell(r_2)\dots$ in $O(\sqrt{n}+D)$ rounds.
As the tree $T'$ is known to the entire graph, all the vertices can compute locally the global tour lengths $g(r_1),g(r_2)\dots$ of the roots. Specifically, for $r_i$ the root of the base fragment $F_i$, consider its descendent base fragments $\F'$ in $T'$. Denote by $r_F$ the root of the base fragment $F$ and by $e_F$ the external edge connecting $F$ to $p(F)$, its parent fragment. Then
\[
g(r_{i})=\ell(r_{i})+\sum_{F\in\F'}\left(\ell(r_{F})+2\cdot w(e_{F})\right)~.
\]
The computation of the global tour lengths for non-roots is done locally in a similar manner as the local tour lengths.
For a vertex $v\in F_i$, let $\tilde{z}_1,\dots,\tilde{z}_k$ be its children in $T$, then
\[
g(v)=\sum_{i=1}^{k}\left(g(\tilde{z}_{i})+2\cdot w(v,\tilde{z}_{i})\right)~.
\]
If $v$ is a leaf of some $F_i$ then it can compute $g(v)$ (since all its children are in ${\cal R}$), and then send $g(v)$ to it parent.
Every intermediate vertex $v$ of $F_i$ can compute $g(v)$ as soon as it received messages from all its children in $F_i$,
and then send $g(v)$ to its own parent. As we run this procedure in all the fragments in parallel, it terminates in $O(\sqrt{n})$ rounds. When this stage concludes, every $v\in V$ knows $g(v)$.

\subsection{Computing Tour Visit Times}

Finally we compute for every vertex $v\in V$ the set $\mL(v)$ and all its tour visiting times. This can be achieved by running a DFS search from the root $\rt$. Direct implementation of the DFS algorithm will take $\Omega(n)$ rounds. Instead, we will use a similar idea to the one above to speed-up the computation. First we will compute ``local''  DFS in the base fragments.
Then aggregate these times into a global DFS, first for the roots $\mathcal{R}$, and afterwards to all of $V$.

First we compute the ``local'' DFS visiting times. We will compute the visiting times in all the fragments in parallel. Consider $F_i$. For every $v\in F_i$ we will compute the entering and exit DFS time, for the global subtree of $T$ rooted at $r_i$ (rather than only in $F_i$). For a vertex $v\in F_i$, denote by $t_i(v)$ the interval between the first ``local'' entrance to the last exit.
The computation is performed top to bottom. First for $r_i$, $t_i(r_i)=[0,g(r_i)]$.
Next, consider a vertex $v\in F_i$ that received its interval $t_i(v)=[a,b]$. By induction $b-a=g(v)$. Denote by $z_1,\dots,z_k$ all the children of $v$ in $T$ (inside and outside $F_i$). Then $v$ will send each child $z_j$ its interval, where
\[
t_i(z_{j})=\left[a+\sum_{q<j}\left(g(z_{q})+2w(v,z_{q})\right)+w(v,z_{j})~,~a+\sum_{q<j}\left(g(z_{q})+2w(v,z_{q})\right)+w(v,z_{j})+g(z_{j})\right]~.
\]
Note that the length of the interval $t_i(z_j)$ is exactly $g(z_j)$. (We remark that roots in ${\cal R}$ do not initiate another interval assignment when they receive message from their parent in $T$.)
This procedure will terminate in $O(\sqrt{n})$ rounds (the \hd of any base fragment), where each vertex $v\in F_i$ knows  $t_i(v)$.
Moreover, consider a root vertex $r_i$ (other than $\rt$), such that $p(r_i)\in F_j$. Then by the description of the algorithm, in addition to $t_i(r_i)$, $t_i$ also knows $t_j(r_i)$, its interval in the DFS tour in the subtree rooted in $r_j$.

Finally we are ready to compute the global DFS intervals. Note that all we are actually missing here, is the first time visit of each root vertex. That is, we will compute a shift $s_i$ for every root $r_i$.
First, each root vertex $r_i$ broadcasts to $\rt$ (through $\tau$), its local interval $t_i(r_i)$ and its local interval in its parent fragment $t_j(r_i)$ (assuming $p(r_i)\in F_j$). This take  $O(\sqrt{n}+D)$ rounds.
Now, $\rt$ has all the required information in order to compute the DFS intervals for $\mathcal{R}$.  Initially $t(\rt)=t_1(\rt)$. Next, by induction assume that $\rt$ computed the interval $t(r_j)=[s_j,s_j+g(r_j)]$ for some $r_j\in\mathcal{R}$. Then for every $r_i\in \mathcal{R}$ such $p(r_i)\in F_j$ and $t_j(r_i)=[b,b+g(r_i)]$, we compute $t(r_i)=[s_j+b,s_j+b+g(r_i)]=[s_i,s_i+g(r_i)]$.
Eventually $\rt$ knows the global DFS intervals of all the roots $\mathcal{R}$ and can broadcast it to all the vertices in $O(\sqrt{n}+D)$ rounds.
Consider a vertex $v\in F_j$. Given its local interval $t_j(v)=[a,b]$ and the shift $s_j$, $v$ computes its global interval $t(v)=[s_j+a,s_j+b]$. This is done locally in all the fragments.

We conclude that in $\tilde{O}(\sqrt{n}+D)$ rounds every vertex $v$ can know its DFS interval. As every vertex can also compute the DFS intervals of it's children (in $T$), both $\mathcal{L}(v)$ and the visiting times of  $\mathcal{L}(v)$ are easily computed.

\section{Shallow Light Tree (SLT)}\label{sec:SLT}
In this section we present our SLT construction.
Recall that an $(\alpha,\beta)$-SLT of $G$ with a root $\rt$ is a tree $T_{\slt}$ that satisfies: 1) $\forall~ v\in V,~d_{T_{\slt}}(\rt,v)\le \alpha\cdot d_G(\rt,v)$, and 2) $w(T_{\slt})\le \beta\cdot w(MST)$. We show the following theorem.
\begin{theorem}[SLT]\label{thm:SLT}
	There is a deterministic distributed algorithm
	in the CONGEST model, that given a weighted graph $G$ with $n$ vertices and \hd $D$, root vertex $\rt$ and parameter $\eps>0$, constructs an $(1+\eps,1+O(\frac1\eps))$-SLT in  $\tilde{O}\left(\sqrt{n}+D\right)\cdot\poly(\eps^{-1})$ rounds.
\end{theorem}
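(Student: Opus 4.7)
The plan is to implement the scheme outlined in the overview in three phases. First, build the required infrastructure: compute the MST $T$ together with its Eulerian traversal $\mL$ via \Cref{lem:MSTtraversal}, and, in parallel, compute a $(1+\eps)$-approximate single-source shortest path tree $T'$ rooted at $\rt$ using a standard distributed approximate SSSP algorithm, whose round complexity is $\tilde O(\sqrt n+D)\cdot\poly(\eps^{-1})$. After this phase every vertex $x$ knows its positions in $\mL$, its parent in $T'$, and an approximation $\tilde d(\rt,x)\in[d_G(\rt,x),(1+\eps)d_G(\rt,x)]$.

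The heart of the algorithm is the distributed break-point selection, carried out in two stages. Cut $\mL$ into $m=\Theta(\sqrt n)$ consecutive intervals of $O(\sqrt n)$ vertex-appearances each; this is a purely combinatorial split, derivable locally from the $\mL$-positions. Let $\bp'$ be the first appearance in each interval. Inside interval $I_t$, starting from the anchor $x_{f_t}\in \bp'$, sweep forward and insert $x_q$ into $\bp$ whenever $d_\mL(\mathrm{last},x_q)>\eps\cdot\tilde d(\rt,x_q)$, i.e.\ the sequential greedy of \cite{KRY95} restricted to $I_t$. Because each interval spans $O(\sqrt n)$ appearances, these sweeps run in parallel in $\tilde O(\sqrt n)$ rounds using the MST-fragment pipeline of \Cref{sec:fragments}. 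Next, since $|\bp'|=O(\sqrt n)$, we convergecast $\bp'$ together with its $\tilde d$-values to $\rt$ via the BFS tree in $O(\sqrt n+D)$ rounds by \Cref{lem:pipe}; the root runs the same greedy on $\bp'$ in $\mL$-order, retaining a subset $\bp''\subseteq \bp'$, and broadcasts $\bp''$ back. The final break-point set is $\bp''\cup(\bp\setminus\bp')$.

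In the last phase, build the skeleton $H$ consisting of $T$ plus, for each retained break point $v$, the $\rt$-to-$v$ path in $T'$; this is realized by an upward aggregation in $T'$ of the bit ``my subtree contains a break point'', which runs in $\tilde O(\sqrt n+D)$ rounds because the approximate SPT used has hop-depth of that order. Finally, compute a $(1+\eps)$-approximate SPT rooted at $\rt$ using only edges of $H$; this is our $T_{\slt}$. The stretch and lightness analyses mirror the classical argument of \cite{KRY95}: condition (1) on consecutive break points, combined with the fact that $\mL$ has total length $2w(T)$, yields $\sum_{v\in\bp}d_G(\rt,v)\le 2w(T)/\eps$ and hence lightness $1+O(1/\eps)$; condition (2) gives a path of length $(1+O(\eps))\cdot d_G(\rt,u)$ from each $u$ to $\rt$ in $H$, and the approximate SPT adds one more multiplicative $(1+\eps)$ factor. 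The main obstacle I expect is showing that the two-stage break-point selection only loses a constant factor: the within-interval greedy ensures conditions (1) and (2) against the anchor $x_{f_t}$, and we must argue that the post-hoc sparsification of $\bp'$ at $\rt$ preserves condition (1) globally while keeping condition (2) up to a constant blow-up of $\eps$. After rescaling $\eps$ by that constant, and if a sharp $(1+\eps)$ stretch is required invoking the CONGEST-adapted reduction of \cite{BFN16}, we obtain the claimed $(1+\eps,1+O(1/\eps))$-SLT. The overall round complexity is dominated by the infrastructure phase and the two global broadcasts, giving $\tilde O(\sqrt n+D)\cdot\poly(\eps^{-1})$.
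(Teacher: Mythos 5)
Your proposal follows essentially the same route as the paper: an Eulerian tour of the MST, an approximate SPT $T_{\rt}$, a two-stage break-point selection (parallel greedy sweeps inside $O(\sqrt n)$-length intervals anchored at the interval starts, followed by a centralized greedy at $\rt$ over the $O(\sqrt n)$ anchors), the skeleton $H=T\cup\bigcup_{b\in\bp}P_b$, a final approximate SPT restricted to $H$, and the reduction of \cite{BFN16} for the regime where the lightness must approach $1$.

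Two remarks. First, there is a concrete gap in your construction of $H$: you justify the upward aggregation of the bit ``my subtree contains a break point'' by asserting that the approximate SPT has hop-depth $\tilde O(\sqrt n+D)$. This is not guaranteed --- an approximate shortest-path tree can have hop-depth $\Omega(n)$ even when $D$ is small, so the naive bottom-up aggregation could take $\Omega(n)$ rounds. The paper circumvents this by decomposing $T_{\rt}$ into $O(\sqrt n)$ fragments of hop-diameter $O(\sqrt n)$ (via the first phase of \cite{KP98}), aggregating the bit locally inside each fragment in parallel, broadcasting the $O(\sqrt n)$ fragment summaries and the virtual fragment tree to $\rt$, resolving the fragment roots globally, and propagating the answer back down inside fragments. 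Second, the step you flag as the main obstacle --- that the two-stage selection loses only a constant --- is indeed the crux, and it does go through: for every $x$ there is $y\in\bp'\cup\bp_1$ with $d_{\mL}(x,y)\le\eps\cdot d_{T_{\rt}}(\rt,x)$, and if $y\in\bp'$ was filtered out, the root's greedy guarantees some $y'\in\bp_2$ with $d_{\mL}(y,y')\le\eps\cdot d_{T_{\rt}}(\rt,y)$; chaining the two hops and using $d_G\le d_{\mL}$ gives $d_{\mL}(x,y')=O(\eps)\cdot d_G(\rt,x)$, i.e.\ only a constant blow-up of the covering radius, while the separation condition (hence the lightness bound $1+O(1/\eps)$) holds separately for consecutive points of $\bp_1$ and of $\bp_2$, each contributing at most $\frac{2}{\eps}\cdot w(T)$. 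With these two repairs your outline matches the paper's proof.
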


Initially we will assume that $\eps\in(0,1)$. Afterwards, we will show how to generalize our result to $\eps\ge 1$.

Intuitively, in order to construct an SLT, one should combine the MST tree $T$ of $G$ with a shortest path tree rooted at $\rt$. Unfortunately, currently existing algorithms for constructing exact shortest path tree \cite{E17,GL18} require more than  $\tilde{O}(\sqrt{n}+D)$ rounds. Instead, we will use an approximate shortest path tree of \cite{BKKL17}. 
Specifically, they show that given a root vertex $\rt$ and a parameter $\eps\in (0,1]$, one can compute an approximate shortest path tree $T_{\rt}$ in  $\tilde{O}((\sqrt{n}+D)/ \poly(\eps))$ rounds. The approximation here is in the sense that for every vertex $v$,
\begin{equation}\label{eq:aspt}
d_{G}(\rt,v)\le d_{T_{\rt}}(\rt,v)\le (1+\eps)\cdot d_{G}(\rt,v)~.
\end{equation}

Our strategy to construct the SLT is similar to the framework of \cite{ABP92,KRY95}. First, construct an MST $T$ and an approximate shortest path $T_{\rt}$ (rooted at $\rt$).
Next, choose a subset of vertices $\bp$ called \emph{Break Points}. An intermediate graph $H$ will be constructed as a union of $T$, and the paths in $T_{\rt}$ from $\rt$ to all the vertices in $\bp$. We will argue that $H$ has lightness $O(\frac1\eps)$, and approximate distance to $\rt$ up to a $1+O(\eps)$ factor.
Our final SLT will be constructed as yet another approximate shortest path tree in $H$ (rooted at $\rt$).
The main difference from previous works is that a refined selection of breakpoints is required, in order to ensure efficient implementation in the CONGEST model.
In previous algorithms $\bp$ was chosen sequentially, i.e., break points were determined one after another. In contrast, we have two phases. In the first phase we choose the $\bp$ locally, while in the second phase we somewhat sparsify the set $\bp$ using a global computation.

We remark that \cite{KRY95} gave an efficient implementation of their algorithm in the PRAM CREW model with $n$ processors in $O(\log n)$ rounds. However, this implementation uses pointer jumping techniques which cannot be translated to the CONGEST model.

\subsection{Break Points Selection}\label{subsec:breakpoints}
Before picking the break points, we create traversal $\mL$ of the MST $T$ (rooted at $\rt$) as in \sectionref{sec:traversal}, such that each vertex $v\in V$ knows its appearances $\mL(v)$ and visiting times $R_x$ for any $x\in\mL(v)$. We will treat vertices with duplications according to their appearances on $\mL$. That is, $v$ will simulate different vertices in $\mL$ (and even can be chosen to $\bp$ several times). Note that every neighbor $u$ of $v$ in $G$ is a neighbor of exactly two vertices in $\mL(v)$ (w.r.t $\mL$), and therefore $v$ indeed can act in several roles without congestion issues.
In addition, each vertex $x_i\in\mL$ will know its index $i$ (i.e., how many vertices precede him in $\mL$ an not only the weighted visiting time $R_{x_i}$). This information can be obtained by running the same algorithm that finds visiting times, ignoring the weights.

Set $\alpha=\left\lceil \sqrt{n}\right\rceil$. We will construct $\bp$ in several steps.
The initial set of break points will be $\bp'=\{x_0,x_\alpha,x_{2\alpha},x_{3\alpha},\dots\}$, i.e.,  all the vertices whose index is a multiple of $\alpha$.
Next, we create a break point set $\bp_1$ from $\mL\setminus\bp'$. In each interval $I_i=\{x_{i\alpha},x_{i\alpha+1},\dots,x_{(i+1)\alpha-1}\}$ in parallel we will add points to $\bp_1$.
Initially, for every $i$, $x_{i\alpha}$ sends a message to $x_{i\alpha+1}$ with the information $\left(x_{i\alpha},R_{x_{i\alpha}}\right)$.
Generally, each vertex $x_j\in I_i$ will get at some point a message $\left(y,R_{y}\right)$ from $x_{j-1}$. The interpretation is that $y$ is the most recent addition to $\bp_1$, with the additional information of $R_{y}$.
Now, $x_j$ will join $\bp_1$ if the following condition holds:
\begin{equation}\label{eq:BPcondition}
d_{\mL}(x_j,y)=R_{x_j}-R_y>\eps\cdot d_{T_{\rt}}(\rt,x_j)~.
\end{equation}
Note that $d_{T_{\rt}}(\rt,x_j)$ is an information locally known to $x_j$ from the approximate shortest path computation.
Now, $x_j$ will send a message to $x_{j+1}$. If $x_j$ joined $\bp_1$, the message will be $\left(x_{j},R_{x_{j}}\right)$. Otherwise the message will be $\left(y,R_{y}\right)$.
After $\alpha-1$ rounds this procedure ends, and each internal vertex $x\notin \bp'$ knows whether it joins $\bp_1$ or not.

We cannot allow all the vertices in $\bp'$ to become break points (as we have no bound on the weight of all the shortest paths from $\rt$ towards them). Filtering which vertices among $\bp'$ will actually join $\bp$ will be done in a centralized fashion.
All the vertices $x_{i\alpha}\in \bp'$ will broadcast to $\rt$ the message $\left(x_{i\alpha},R_{x_{i\alpha}}\right)$.  By \lemmaref{lem:pipe} this can be done in $O(\sqrt{n}+D)$ rounds (since there are at most $2\sqrt{n}$ relevant indices $0<i\le 2n-2$).
The root $\rt$ locally creates the set $\bp_2\subseteq \bp'$
as follows. Initially $\rt=x_0$ joins $\bp_2$. Next, sequentially, for $x_{i\alpha}$ where $y\in \bp'$ was the last vertex to join $\bp_2$, $x_{i\alpha}$ will join $\bp_2$ if $d_{\mL}(x_{i\alpha},y)=R_{x_{i\alpha}}-R_y>\eps\cdot d_{T_{\rt}}(\rt,x_{i\alpha})$.
After this local computation, $\rt$ will broadcast to the entire graph the set $\bp_2$ in $O(\sqrt{n}+D)$ rounds. Define the final set of breakpoints as $\bp=\bp_1\cup\bp_2$.
(Intuitively, this step computes an SLT for the submetric of the original metric, induced by the set $\bp'$.)

\subsection{The Creation of $H$}
For a point $b\in\bp$, let  $P_b$ be the unique path from $\rt$ to $b$ in $T_{\rt}$. Let $H=T\cup\bigcup_{b\in\bp}P_b$. Denote by $A_{\bp}$ the set of vertices whose subtree in $T_{\rt}$ contains a vertex of $\bp$.
Each vertex knows whether it belongs to $\bp$, and we would like to ensure that every $v\in A_{\bp}$ will add an edge to its parent in $T_{\rt}$. Then adding the MST edges will conclude the construction of $H$. In the remaining part of this sub-section we show the computation of $A_{\bp}$.

We start by creating a set $\F$ of $O(\sqrt{n})$ fragments, which are subtrees of $T_{\rt}$ of \hd  $O(\sqrt{n})$ (this could be done by applying the first phase of the MST algorithm of \cite{KP98}, see \sectionref{sec:fragments} for more details). Each fragment can locally (in parallel) compute in $O(\sqrt{n})$ rounds whether it contains a break point, since it has bounded hop-diameter. Next, each fragment sends to $\rt$ its id, whether it contains a break point, and all of its outgoing edges in $T_{\rt}$. Note there is a total of $O(\sqrt{n})$ messages in this broadcast, so by \lemmaref{lem:pipe} this will take $O(\sqrt{n}+D)$ rounds (in fact, all vertices will receive all these messages). Now, $\rt$ can form a virtual tree $T'$ whose vertices are the fragments $\F=\{F_1,F_2,\dots\}$, and its edges connect fragments $F_i, F_j$ if there is an edge of $T_{\rt}$ between a vertex of $F_i$ to a vertex of $F_j$.
Now, we can also assign roots $r_1,r_2,\dots$ (where $r_i\in F_i$ and $r_1=\rt$), so that $r_i$ is the vertex with an edge in $T_{\rt}$ to a vertex in the parent of $F_i$ in $T'$. (See \sectionref{sec:traversal} for more details and a picture of the virtual tree and its roots.)

Then, $\rt$ is able to compute locally for every root $r_i\in F_i$ whether $r_i\in A_{\bp}$ (i.e.  its subtree contains a break point), simply by inspecting whether $F_i$ has a descendant in $T'$ with a break point. Then broadcast this information on all roots in $O(\sqrt{n}+D)$ rounds.
Note that now every local leaf $v\in F_i$ knows whether its subtree contains a break point.
Finally, locally in parallel in $O(\sqrt{n})$ rounds, in all the fragments $F_i$ we can compute for every vertex $v\in F_i$ whether $v\in A_{\bp}$.

\subsection{Stretch and Lightness Analysis}
In this subsection we argue that $H$ has the desired lightness and stretch.
We name the break points $\bp_1=\{b_0,b_1,\dots\}$, $\bp_2=\{\tilde{b}_0,\tilde{b}_1,\dots\}$ according to the order of their appearance in $\mL$.
It is clear by construction that for  every pair of consecutive break points $\tilde{b}_{j-1},\tilde{b}_j\in \bp_2$, $d_{\mL}(\tilde{b}_{j},\tilde{b}_{j-1})>\eps\cdot d_{T_{\rt}}(\rt,\tilde{b}_{j})$.
We claim that this property remain true also for every consecutive break points $b_{j-1},b_j\in\bp_1$. Indeed, let $i$ such that $b_j\in I_i$. If $b_{j-1}\in I_i$ then it follows by construction. Otherwise, $b_j$ is the first break point in $I_i$, and by \Cref{eq:BPcondition} it holds that $d_{\mL}(b_{j},b_{j-1})>d_{\mL}(b_{j},x_{i\alpha})>\eps\cdot d_{T_{\rt}}(\rt,b_{j})$.
\begin{corollary}\label{cor:SLTlight}
	$w(H)\le (1+\frac4\eps)\cdot w(T)$.
\end{corollary}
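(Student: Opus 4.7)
The plan is to bound $w(H)$ by $w(T) + \sum_{b \in \bp} w(P_b)$. Since $H$ is the union of $T$ together with the paths $\{P_b\}_{b \in \bp}$, each edge of $H$ appears on the right hand side at least once, so this is indeed an upper bound. To control the second summand I will split $\bp = \bp_1 \cup \bp_2$ and handle the two parts separately, showing $\sum_{b \in \bp_i} w(P_b) \le \frac{2}{\eps} \cdot w(T)$ for each $i \in \{1,2\}$.

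For a single $b \in \bp$, since $P_b$ is the $\rt$-to-$b$ path in $T_\rt$, its weight equals $d_{T_\rt}(\rt,b)$. Enumerate $\bp_1 = \{b_0, b_1, \dots\}$ in the order of their appearance along $\mL$; note $b_0 = \rt$ contributes $0$. For each $j \ge 1$, the paragraph before the corollary establishes that
\[
d_{\mL}(b_j, b_{j-1}) > \eps \cdot d_{T_\rt}(\rt, b_j),
\]
equivalently $w(P_{b_j}) = d_{T_\rt}(\rt,b_j) < \frac{1}{\eps}\, d_\mL(b_j, b_{j-1})$. Summing over $j$ gives a telescoping bound
\[
\sum_{b \in \bp_1} w(P_b) \;<\; \frac{1}{\eps} \sum_{j \ge 1} d_\mL(b_j, b_{j-1}) \;\le\; \frac{1}{\eps} \cdot \mathrm{length}(\mL) \;=\; \frac{2}{\eps}\cdot w(T),
\]
using that the Eulerian traversal $\mL$ has total length $2\cdot w(T)$ (noted in \sectionref{sec:traversal}). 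An identical argument for $\bp_2 = \{\tilde b_0, \tilde b_1, \dots\}$ yields the same bound.

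Combining, $\sum_{b \in \bp} w(P_b) \le \frac{4}{\eps} \cdot w(T)$, and therefore $w(H) \le w(T) + \frac{4}{\eps} \cdot w(T) = (1 + \frac{4}{\eps})\cdot w(T)$, as claimed. There is no real obstacle here beyond ensuring the two ``consecutive breakpoint'' separation properties are available for both $\bp_1$ and $\bp_2$; the paragraph immediately preceding the corollary already verifies this (for $\bp_2$ by construction of $\rt$'s local greedy selection, and for $\bp_1$ by the inductive message-passing within each interval $I_i$, with the first break point of $I_i$ handled via comparison to the anchor $x_{i\alpha}$). Once that is in hand, the proof is an immediate telescoping.
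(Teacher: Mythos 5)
Your proof is correct and follows essentially the same route as the paper: decompose $w(H)\le w(T)+\sum_{b\in\bp_1}w(P_b)+\sum_{\tilde b\in\bp_2}w(P_{\tilde b})$, invoke the separation property $d_{\mL}(b_j,b_{j-1})>\eps\cdot d_{T_\rt}(\rt,b_j)$ for consecutive break points in each of $\bp_1,\bp_2$, and bound the resulting telescoping sum by $w(\mL)=2\,w(T)$. No issues.
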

\begin{proof}
The graph $H$ consists of three parts, $w(H)\le w(T)+\sum_{b\in\bp_{1}}w(P_{b})+\sum_{\tilde{b}\in\bp_{2}}w(P_{\tilde{b}})$.
	We first bound the weight of the edges added due to $\bp_1$:
	\[
	\sum_{j\ge1}w(P_{b_{j}})=\sum_{j\ge1}d_{T_{\rt}}(\rt,b_{j})<\sum_{j\ge1}\frac{1}{\eps}\cdot d_{\mL}(b_{j-1},b_{j})\le\frac{1}{\eps}\cdot w(\mL)=\frac{2}{\eps}\cdot w(T)~.
	\]
	Similarly for $\bp_2$, $\sum_{j\ge1}w(P_{\tilde{b}_{j}})\le \frac{2}{\eps}\cdot w(T)$. The corollary follows.
\end{proof}
\begin{lemma}\label{lem:stretch-H}
	For every $v\in V$, $d_H(\rt,v)\le (1+25\eps)\cdot d_{G}(\rt,v)$.	
\end{lemma}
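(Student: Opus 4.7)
The plan is to exhibit, for each $v \in V$, a path in $H$ from $\rt$ to $v$ of length at most $(1+25\eps)\cdot d_G(\rt,v)$. The natural candidate is this: locate a break point $b \in \bp$ whose appearance in the Euler tour $\mL$ is close to some appearance of $v$, and then concatenate the path $P_b \subseteq T_{\rt}$ from $\rt$ to $b$ (which lies in $H$ by construction) with the unique $T$-path from $b$ to $v$ (also in $H$, since $T \subseteq H$).

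First I would locate $b$. Fix an appearance $x_q$ of $v$ in $\mL$, and let $I_i$ be the interval containing it. Because every vertex strictly between the last admitted point and $x_q$ failed the admission test \Cref{eq:BPcondition}, the most recent selected $y \in \bp_1 \cup \{x_{i\alpha}\}$ at or before $x_q$ satisfies
\[
d_{\mL}(y,x_q) \le \eps \cdot d_{T_{\rt}}(\rt,x_q).
\]
If $y \in \bp_1$, or $y = x_{i\alpha} \in \bp_2$, set $b = y$. Otherwise $y = x_{i\alpha} \notin \bp_2$, and by the analogous consecutive-distance property for $\bp_2$ established in the paragraph preceding \Cref{cor:SLTlight}, there exists $\tilde b \in \bp_2$ with $d_{\mL}(\tilde b, x_{i\alpha}) \le \eps \cdot d_{T_{\rt}}(\rt, x_{i\alpha})$; set $b = \tilde b$. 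In either case,
\[
\Delta \;:=\; d_{\mL}(b, x_q) \;\le\; \eps\cdot d_{T_{\rt}}(\rt,x_q) + \eps\cdot d_{T_{\rt}}(\rt,x_{i\alpha}),
\]
with the second summand absent in the first case.

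Next I would bound $\Delta$ by a small multiple of $d := d_G(\rt,v)$. By \Cref{eq:aspt}, $d_{T_{\rt}}(\rt,x_q) \le (1+\eps) d$. Since the $T$-path from $v$ to $x_{i\alpha}$ is contained in the $\mL$-subpath between their appearances, $d_T(v,x_{i\alpha}) \le d_{\mL}(x_q, x_{i\alpha}) \le \eps(1+\eps) d$, giving $d_G(\rt,x_{i\alpha}) \le (1 + \eps(1+\eps)) d$ and hence $d_{T_{\rt}}(\rt,x_{i\alpha}) \le (1+\eps)(1+\eps+\eps^2) d$. Substituting and using $\eps \in (0,1)$ yields $\Delta \le 8\eps d$ with slack.

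Finally I assemble the estimate: using $P_b \cup T \subseteq H$, $d_T(b,v) \le \Delta$, and $d_G(\rt,b) \le d + d_G(v,b) \le d + \Delta$,
\[
d_H(\rt, v) \;\le\; d_{T_{\rt}}(\rt,b) + d_T(b,v) \;\le\; (1+\eps)(d + \Delta) + \Delta \;=\; (1+\eps) d + (2+\eps)\Delta \;\le\; (1+25\eps) d.
\]
The main obstacle is the middle step: the two-stage selection of break points (local greedy within each $I_i$, then a centralized sparsification of the anchor set $\bp'$) compounds the slack, since the anchor $x_{i\alpha}$ that bootstraps the local greedy may itself be deleted and need to be replaced by a farther $\bp_2$ representative. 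Controlling this requires the observation that $d_T(v, x_{i\alpha}) \le d_{\mL}(x_q, x_{i\alpha})$, which pins $x_{i\alpha}$ close to $v$ in the graph metric and keeps the $\eps \cdot d_{T_{\rt}}(\rt, x_{i\alpha})$ term under control.
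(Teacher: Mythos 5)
Your proof is correct and follows essentially the same route as the paper's: locate a nearby point of $\bp'\cup\bp_1$ via the greedy test \eqref{eq:BPcondition}, fall back to a $\bp_2$ representative if the anchor was not retained, bound the resulting tour distance by $8\eps\cdot d_G(\rt,v)$, and assemble the path as $P_b$ followed by the $T$-path, yielding the same $(1+25\eps)$ constant. The only cosmetic difference is that you justify the first step by the failures of intermediate vertices, whereas what is actually used is that $x_q$ itself either joined $\bp_1$ or failed the test; the conclusion is identical.
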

\begin{proof}
	Consider a vertex $v\in V$. Let $x$ be an arbitrary vertex from $\mL(v)$.
By construction there is a point $y\in\bp'\cup\bp_1$ such that
\begin{equation}\label{eq:y}
d_{\mL}(x,y)\le\eps\cdot d_{T_{\rt}}(\rt,x)~.
\end{equation}
	Moreover, for $y$ there is a point $y'\in \bp$ such that
\begin{equation}\label{eq:y'}
d_{\mL}(y,y')\le\eps\cdot d_{T_{\rt}}(\rt,y)~,
\end{equation}
(it might be that $x=y$ or $y=y'$).
	We first bound $d_{\mL}(x,y')$,	using that $d_G\le d_{\mL}$ and the assumption $\eps\le 1$.
	\begin{align*}
	d_{\mL}(x,y') & =d_{\mL}(x,y)+d_{\mL}(y,y')\\
	& \stackrel{\eqref{eq:y'}}{\le} d_{\mL}(x,y)+\epsilon\cdot d_{T_{\rt}}(\rt,y)\\
	& \stackrel{\eqref{eq:aspt}}{\le} d_{\mL}(x,y)+\epsilon\cdot (1+\eps)\cdot d_{G}(\rt,y)\\
    & \le  d_{\mL}(x,y)+2\eps\cdot (d_G(x,y)+d_{G}(\rt,x))\\
	& \le(1+2\epsilon)\cdot d_{\mL}(x,y)+2\eps\cdot d_{G}(\rt,x)\\
    & \stackrel{\eqref{eq:y}}{\le} (1+2\epsilon)\cdot\eps\cdot d_{T_{\rt}}(\rt,x)+2\eps\cdot d_{G}(\rt,x)\\
    & \stackrel{\eqref{eq:aspt}}{\le} (1+2\epsilon)\cdot\eps\cdot(1+\eps)\cdot d_{G}(\rt,x)+2\eps\cdot d_{G}(\rt,x)\\
    &\le 8\epsilon\cdot d_{G}(\rt,x)~.
	\end{align*}
	We conclude,	
	\begin{align*}
	d_{H}(\rt,x) & \le d_{T_{\rt}}(\rt,y')+d_{\mL}(x,y')\\
	& \le\left(1+\eps\right)\cdot\left(d_{G}(\rt,x)+d_{G}(x,y')\right)+d_{\mL}(x,y')\\
	& \le\left(1+\eps\right)\cdot d_{G}(\rt,x)+3\cdot d_{\mL}(x,y')\le\left(1+25\epsilon\right)\cdot d_{G}(\rt,x)~.
	\end{align*}
\end{proof}

\subsection{Finishing the Construction and Generalization to $\eps>1$}
After creating the subgraph $H$, we create a $(1+\eps)$-shortest path tree $T_{\slt}$ (using \cite{BKKL17}) of $H$ rooted at $\rt$ in $\tilde{O}(\sqrt{n}+D)/\poly(\eps)$ rounds. The tree $T_{\slt}$ has weight at most $w(T_{\slt})\le w(H)\le (1+\frac4\eps)\cdot w(T)$ by \Cref{cor:SLTlight}. Moreover, by \lemmaref{lem:stretch-H} for every vertex $v\in V$ it holds that
\[
d_{T_{\slt}}(\rt,v)\le (1+\eps)\cdot d_{H}(\rt,v)\le\left(1+\eps\right)\cdot(1+25\eps)\cdot d_{G}(\rt,v)\le(1+51\eps)\cdot d_{G}(\rt,v)~.
\]
By rescaling $\eps$, we conclude that for every $\eps\in (0,1)$ we can construct an $(1+\eps,O(\frac1\eps))$-SLT.
This is the right behavior (up to constant factors) when the distortion is small, as shown in \cite{KRY95}. We would like to obtain the inverse tradeoff, when the lightness is close to 1, say $1+\gamma$ for $0<\gamma<1$.
If we will directly apply our construction for large $1<\eps=1/\gamma$, then it can be checked that we will get distortion $O(1/\gamma^2)$ and lightness $1+\gamma$, instead of the desired $O(1/\gamma)$ distortion (roughly speaking, this is because a breakpoint in $BP'$ may have been removed, so in the analysis we applied a chain of two breakpoints).
Fortunately, we can use a reduction due to \cite{BFN16}.
\begin{lemma}[\cite{BFN16}]\label{lem:reduction_lightness}
	Let $G=(V,E)$ be a graph, $0<\delta<1$ a parameter and $t:{V\choose 2}\rightarrow\mathbb{R}_{+}$ some function. Suppose that we have an algorithm that for any given weight function $w:E\rightarrow\mathbb{R}_{+}$ constructs a spanner $H$ with lightness $\ell$ such that every pair $u,v\in V$ suffers distortion at most $t(u,v)$.
	Then for every weight function $w$ there exists a spanner $H$ with lightness $1+\delta\ell$  and such that every pair $u,v$ suffers distortion at most $t(u,v)/\delta$.
\end{lemma}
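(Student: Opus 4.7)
The plan is to prove the lemma by a weight-rescaling trick, where we scale down the MST edges to make them cheap for the black-box algorithm, and then glue the MST back in at the end. Concretely, I would first compute the MST $T$ of $G$ under the given weight function $w$, and then define a modified weight function $w'$ on $E$ by
\[
w'(e) \;=\; \begin{cases} \delta\cdot w(e) & e \in T,\\ w(e) & e\notin T. \end{cases}
\]
Now apply the hypothesized algorithm to $(G,w')$ to obtain a spanner $H_0$ (with lightness $\ell$ and pairwise distortion $t(u,v)$ \emph{measured under $w'$}). The final spanner I return is $H := H_0 \cup T$.

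For the lightness bound under $w$, I would note that $T$ is itself a spanning tree of $(G,w')$, so $w'(\mathrm{MST}_{G,w'}) \le w'(T) = \delta\cdot w(T)$, and thus the algorithm's guarantee yields $w'(H_0) \le \ell\cdot w'(\mathrm{MST}_{G,w'}) \le \delta\ell\cdot w(T)$. Since $w$ and $w'$ agree on edges outside $T$, we get $w(H_0\setminus T) = w'(H_0\setminus T) \le w'(H_0) \le \delta\ell\cdot w(T)$, and therefore
\[
w(H) \;\le\; w(T)+w(H_0\setminus T) \;\le\; (1+\delta\ell)\cdot w(T),
\]
which is exactly the claimed lightness of $1+\delta\ell$ against $\mathrm{MST}_{G,w}$.

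For the distortion bound, I fix a pair $u,v\in V$. By the algorithm's guarantee there is a $u$-$v$ path $P\subseteq H_0$ with $w'(P) \le t(u,v)\cdot d_{G,w'}(u,v)$. Two elementary comparisons between $w$ and $w'$ do the rest: (i) since $w'(e)\ge \delta\cdot w(e)$ for every edge, summing along $P$ gives $w(P) \le w'(P)/\delta$; and (ii) since $w'\le w$ everywhere, $d_{G,w'}(u,v)\le d_{G,w}(u,v)$. Combining,
\[
d_{H,w}(u,v)\;\le\; w(P)\;\le\; \frac{w'(P)}{\delta}\;\le\; \frac{t(u,v)}{\delta}\cdot d_{G,w'}(u,v)\;\le\; \frac{t(u,v)}{\delta}\cdot d_{G,w}(u,v),
\]
as desired. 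There is no real obstacle here; the only subtlety to flag in the write-up is the asymmetry in the definition of $w'$ (tree edges scale, non-tree edges do not), which is precisely what simultaneously shrinks the $w'$-lightness of $H_0$ relative to $w(T)$ and loses only a factor of $1/\delta$ in the stretch.
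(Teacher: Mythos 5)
Your proposal is correct and matches the reduction the paper attributes to \cite{BFN16}: the paper's own (one-line) description says precisely that the new weight of an edge depends only on $\delta$, $w(e)$, and whether $e$ is an MST edge, which is exactly your rescaling $w'(e)=\delta w(e)$ on tree edges. Both the lightness bound (via $w'(\mathrm{MST}_{G,w'})\le\delta\, w(T)$ and agreement of $w,w'$ off the tree) and the stretch bound (via $w'\ge\delta w$ pointwise and $w'\le w$) are sound.
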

The reduction algorithm works by first changing the edge weights, and then executing the original algorithm. To compute the new weight of an edge $e\in E$, we only need to know the parameter $\delta$, the original weight $w(e)$ and whether $e$ belongs the MST. Thus we can easily use this reduction in the CONGEST model as well.

We presented an algorithm that constructs a subgraph $H$ with constant lightness (say $c$) and distortion $2$ from $\rt$. We will use distortion function below,
\[
t(u,v)=\begin{cases}
2 & \rt\in\left\{ u,v\right\} \\
\infty & \text{otherwise}
\end{cases}~.
\]
Thus, given any $0<\gamma<1$, we can apply \Cref{lem:reduction_lightness} with the parameter $\delta=\gamma/c$, and obtain lightness $1+\gamma$ and distortion $O(1/\gamma)$. \Cref{thm:SLT} now follows.

\section{Distributed Light Spanner}\label{sec:spanner}
In this section we devise an efficient distributed algorithm for light spanners in general graphs. In particular, we prove the following:
\begin{theorem}[Light Spanner]\label{thm:spanner}
	There is an randomized distributed algorithm
	in the CONGEST model, that given a weighted graph $G=(V,E,w)$ with $n$ vertices and \hd $D$, and parameters $k\in\N$, $\eps\in(0,1)$, in  $\tilde{O}_{\eps}\left(n^{\frac12+\frac{1}{4k+2}}+D\right)$ rounds, w.h.p. returns a $(2k-1)(1+\eps)$ spanner $H$ with $O_\eps(k\cdot n^{1+\frac{1}{k}})$ edges and lightness $O_\eps(k\cdot n^{\frac{1}{k}})$.
\end{theorem}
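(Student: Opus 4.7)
The plan is to follow the CDNS-style bucketing paradigm, where edges are grouped by weight and a separate spanner is built per group. Using \lemmaref{lem:MSTtraversal}, compute the MST $T$ and the Euler tour $\mathcal{L}$, and broadcast $L \eq 2w(T)$ to every vertex in $\tilde{O}(\sqrt{n}+D)$ rounds. Partition the edges of $G$ into $O(\log_{1+\epsilon} n)$ \emph{heavy} buckets, where $E_i$ contains edges of weight in $\bigl(L/(1+\epsilon)^{i+1},\,L/(1+\epsilon)^{i}\bigr]$, plus one \emph{tiny} bucket of all edges of weight at most $L/n$. For the tiny bucket, invoke the distributed $(2k-1)$-spanner algorithm of Baswana--Sen in $O(k)$ rounds; the resulting $O(k\,n^{1+1/k})$ edges, each of weight at most $L/n$, contribute total weight at most $O(k\,n^{1/k}\cdot w(T))$, which fits the lightness budget. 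The MST itself is always added to the output $H$ to serve as intra-cluster backbone.

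For each heavy bucket $E_i$, cut $\mathcal{L}$ into $O((1+\epsilon)^i/\epsilon)$ consecutive arcs of total weight at most $\epsilon L/(1+\epsilon)^i$ each; each arc yields a cluster $C$ of weak diameter at most $\epsilon L/(1+\epsilon)^i$ in $G$. Form the unweighted cluster graph $\mathcal{G}_i$ whose vertices are the clusters and whose edges correspond to $E_i$-edges crossing between clusters. Any $E_i$-edge $e$ used in a $(2k-1)$-hop path of a spanner $\mathcal{H}_i$ of $\mathcal{G}_i$ can be connected to the actual endpoints via MST detours of total length at most $2\epsilon L/(1+\epsilon)^i \le 2\epsilon\cdot w(e)$, so the lifted $G$-path has stretch at most $(2k-1)(1+O(\epsilon))$. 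The plan is therefore to simulate the unweighted $(2k-1)$-spanner algorithm of \cite{EN17spanner} on $\mathcal{G}_i$, which runs in $\tilde{O}(n_i^{1/2+1/(4k+2)})$ rounds and produces $O(n_i^{1+1/k})$ edges on $n_i$ vertices; each chosen $\mathcal{H}_i$-edge is mapped back to some $E_i$-edge and added to $H$. Summing over buckets (using $n_i\le n$ and a geometric argument to bound $\sum_i L/(1+\epsilon)^i \cdot n_i^{1/k}$) yields the claimed $O_\epsilon(k\,n^{1+1/k})$ sparsity and $O_\epsilon(k\,n^{1/k})$ lightness.

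The main obstacle is simulating each round of the $\mathcal{G}_i$-algorithm while only communicating over $G$: adjacent clusters in $\mathcal{G}_i$ may be far apart in $G$-hops, and several simulated edges may share a single $G$-edge. Following the Overview, refine the partition into \emph{small} clusters (hop-diameter $O(\sqrt{n})$) and \emph{large} clusters; by further splitting any cluster whose BFS tree exceeds depth $\sqrt{n}$, guarantee that there are only $O(\sqrt{n})$ large clusters. For a small cluster, aggregate and disseminate its per-cluster messages along its restriction of $\mathcal{L}$ by the pipelining of \lemmaref{lem:pipe}, costing $\tilde{O}(\sqrt{n})$ per simulated round. For large clusters, convergecast their $O(\sqrt{n})$ messages to $\rt$ over the global BFS tree $\tau$, let $\rt$ execute the centralized decisions, and broadcast back in $\tilde{O}(\sqrt{n}+D)$ rounds. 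The hardest part will be bounding congestion carefully: the EN17 algorithm's message traffic at each round is linear in the number of surviving clusters, so pipelining along $\mathcal{L}$ and $\tau$ suffices, and the $\tilde{O}(n_i^{1/2+1/(4k+2)})$ simulated rounds can be collapsed (by running their natural BFS-style phases on $G$ directly rather than one $\mathcal{G}_i$-round at a time) to a total cost of $\tilde{O}(n^{1/2+1/(4k+2)}+D)$ per bucket. Processing the $O(\log_{1+\epsilon} n)$ buckets on disjoint message channels (or sequentially with a $\log$ factor) yields the overall complexity $\tilde{O}_\epsilon(n^{1/2+1/(4k+2)}+D)$ claimed in \Cref{thm:spanner}.
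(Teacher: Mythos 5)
Your overall architecture matches the paper's: bucket the edges by weight relative to $L=2w(T)$, handle the tiny bucket with Baswana--Sen, build cluster graphs $\mathcal{G}_i$ from arcs of the Euler tour $\mathcal{L}$, simulate the unweighted spanner algorithm of \cite{EN17spanner} on each $\mathcal{G}_i$, and lift edges back with MST detours. The stretch and lightness accounting you sketch is essentially the paper's as well.

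However, there is a genuine gap exactly where the paper does its main technical work: the cost of the simulation. First, the algorithm of \cite{EN17spanner} runs in $k$ rounds on its input graph, not in $\tilde{O}(n_i^{1/2+1/(4k+2)})$ rounds as you state; the quantity $n^{1/2+1/(4k+2)}$ must come out of the cost of simulating \emph{each} of these $k$ rounds on $G$, and your proposal never establishes that per-round cost. Second, your small/large dichotomy is set up at the wrong threshold and is internally inconsistent: splitting every cluster whose BFS tree exceeds depth $\sqrt{n}$ eliminates large clusters rather than bounding their number by $O(\sqrt{n})$, and in any case at scale $i$ the number of clusters is forced to be $\Theta((1+\eps)^i/\eps)$, which for small $i$ can be as large as $n^{k/(2k+1)}\gg\sqrt{n}$. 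The correct balance is per \emph{scale}, not per cluster: when $(1+\eps)^i/\eps\le n^{k/(2k+1)}$ one simulates a round by global convergecast/broadcast of all $|\mathcal{C}_i|$ cluster messages in $O(|\mathcal{C}_i|^{1+1/k}+D)$ rounds (the exponent $1+1/k$ arising when collecting the spanner edges, one representative per cluster pair), and when $(1+\eps)^i/\eps>n^{k/(2k+1)}$ one adds extra cluster centers at index multiples of $\lceil\eps n/(1+\eps)^i\rceil$ along $\mathcal{L}$ so that every cluster's communication interval has hop-length at most $\eps n/(1+\eps)^i\le n^{1/2+1/(4k+2)}$ and all aggregation is done locally inside these intervals (using the w.h.p.\ bound of $O(n^{1/k}\log n)$ spanner edges per cluster for the final convergecast). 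Equating $|\mathcal{C}_i|^{1+1/k}$ with $n/|\mathcal{C}_i|$ is what produces the threshold $n^{k/(2k+1)}$ and the exponent $\tfrac12+\tfrac{1}{4k+2}$; your closing suggestion to ``collapse'' the simulated rounds by running BFS-style phases directly on $G$ does not substitute for this argument, since the $k$ rounds of \cite{EN17spanner} are inherently sequential on $\mathcal{G}_i$ and each requires a global max-aggregation per cluster.
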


Our algorithm is similar in spirit to the algorithms of \cite{CDNS95,ES16,ENS15}. It
begins by computing a traversal $\mL=\{\rt=x_0,x_1,\dots,x_{2n-2}\}$ of the MST $T$, as in \Cref{sec:traversal}. In particular, every vertex $v$ knows the set of its appearances $\mL(v)$, and the visiting times and indices of every $x\in\mL(v)$.
Let $L=w(\mL)=2w(T)$ denote the length of $\mL$. Note that the value $L$ is known to all the vertices (or can be broadcasted in $O(D)$ rounds).

Set $E'=\{e\in E~:~ w(e)\le L/n\}$, and for every $i\in\{0,1,\dots,\left\lceil \log_{1+\eps}n\right\rceil\}$ set $E_i=\{e\in E~:~ \frac{L}{(1+\eps)^{i+1}}<w(e)\le \frac{L}{(1+\eps)^{i}}\}$. The algorithm constructs a different spanner for each edge set, and the final spanner will be a union of all these spanners.
First, build a spanner $H'$ for the low weight edges $E'$. This is done using the algorithm of Baswana and Sen \cite{BS07}. Specifically we run \cite{BS07} on the graph $G'=(V,E')$. In $O(k)$ rounds\footnote{The original paper claimed $O(k^2)$ rounds, but it has been observed that their algorithm can be implemented in $O(k)$ rounds.} we get a $(2k-1)$-spanner $H'$ of $G'$, where the expected number of edges is bounded by $O(k\cdot n^{1+1/k})$.

Next, for every $i\in\{0,1,\dots,\left\lceil \log_{1+\eps}n\right\rceil\}$ we will define a cluster graph ${\cal G}_i$, on which we will simulate a spanner for unweighted graphs. For each $i$, we partition $V$ into clusters ${\cal C}_i$.
Let $\mathcal{G}_i$ be an unweighted graph with $\mathcal{C}_i$ as its vertex set, and there is an edge between two clusters $A,B$ if there are vertices $a\in A,~b\in B$ such that $\{a,b\}\in E_i$.

In \cite{ES16,ENS15} the greedy spanner was applied on each ${\cal G}_i$. However, we cannot do so efficiently in a distributed setting. Instead, we will use the randomized algorithm of \cite{EN17spanner} on each ${\cal G}_i$. For an unweighted graph with $N$ vertices, that algorithm provides (with constant probability) a $(2k-1)$-spanner with $O(N^{1+1/k})$ edges, computed in $k$ rounds. Even though \cite{EN17spanner} gave an efficient distributed implementation, the input graph ${\cal G}_i$ is not the communication graph $G$. Our main technical contribution in this section is an adaptation of that algorithm for the cluster graphs ${\cal G}_i$, which also requires some changes in the partition that generates these graphs.

The algorithm of \cite{EN17spanner} runs in $k$ rounds. Initially, every vertex $x$ independently samples a value $r(x)$ from some distribution. In the first round $x$ initializes $m(x)=r(x)$, $s(x)=x$ and sends $(s(x),m(x)-1)$ to all its neighbors. In each following round, every vertex $x$ that received messages $\{(s(v),m(v))\}_{v\in N(x)}$ from its neighbors in the previous round, computes $u=\argmax_{v\in N^+(x)}\{m(v)\}$, updates $m(x)=m(u)$ and $s(x)=s(u)$, and sends $(s(x),m(x)-1)$ to all its neighbors. After $k$ rounds, each vertex $x$ adds to the spanner edges: for every vertex $y$, add one edge to an arbitrary vertex in the set $\{v\in N(x)~:~m(v)\ge m(x)-1\wedge s(v)=y\}$, if exists (in other words, for every source $y$ whose message reached $x$ with value at least $m(x)-1$, we add 1 edge to the spanner, from $x$ to a neighbor $v$ that sent $x$ the message on $y$). A useful property of the algorithm is that the stretch is guaranteed\footnote{For the stretch bound, the random samples $r(x)$ need to satisfy $r(x)<k$, which can be verified locally. The stretch analysis in \cite{EN17spanner} is conditioned on the event $\forall x\in V~:~r(x)<k$.} while the number of edges is bounded in expectation.

In order to implement this algorithm in ${\cal G}_i$, the vertices in each cluster $C\in {\cal C}_i$ need to compute the maximum over all the values they received from their neighbors in the previous round, and then send this value. Finally, we need to make sure that for every pair of clusters we want to connect, only one edge is added.
We will distinguish between two cases, as long as the hop diameter of clusters is not too large, they can compute locally the maximum value. When the hop diameter is too large, we will ensure that there are few clusters, and all the relevant information will be broadcasted to the entire graph.

\paragraph{Case 1: $i<\log_{1+\eps}(\eps\cdot n^{\frac{k}{2k+1}})$.}
Set $w_i=\frac{L}{(1+\eps)^i}$. We now describe the partition of $V$ into clusters $\mathcal{C}_i$. Each cluster $C\in{\cal C}_i$ has a name in $\{0,1,2,\dots,\frac{L}{\eps\cdot w_{i}}\}$, and its weak diameter is at most $\eps\cdot w_i$ w.r.t the MST metric (i.e. for any $u,v\in C$, $d_T(u,v)\le\eps\cdot w_i$). Let $v\in V$, and $x\in\mL(v)$ be an arbitrary appearance. Then $v$ will belong to the cluster $\left\lceil \frac{R_{x}}{\eps\cdot w_{i}}\right\rceil$
(recall that $R_x=d_{\cal L}(rt,x)$).
The weak diameter is indeed bounded, as for every $v,u\in V$ which both belong to the same cluster $j$, it holds that there are $x'\in\mL(v)$, $x''\in\mL(u)$ such that $\left|R_{x'}-R_{x''}\right|\le\eps\cdot w_{i}$, hence $d_{T}(u,v)\le d_{\mL}(x',x'')\le \eps\cdot w_i$.
Note that each vertex belongs to a single cluster. The number of clusters is bounded by $\left\lceil \frac{L}{\eps\cdot w_{i}}\right\rceil+1 =\left\lceil \frac{(1+\eps)^{i}}{\eps}\right\rceil+1 \le \left\lceil n^{\frac{k}{2k+1}}\right\rceil+1$.

Before the rounds simulations, each vertex $v\in V$ sends the identity of its cluster to all its neighbors. Additionally, $\rt$ samples a value $r_A$ for every cluster $A\in\mathcal{C}_i$, and broadcasts all these values to all the vertices in $O(|\mathcal{C}_i|+D)$ rounds, using \lemmaref{lem:pipe}.
Next, we describe how to implement a single round.
In the beginning of the round, each vertex knows the message $(s(A),m(A))$ that all clusters $A\in {\cal C}_i$ sent in the previous round. The simulation has three phases:
(1) Local phase: each vertex $v\in A$, computes the maximum $m(B)$ over all neighboring clusters $B$. No communication required, as $v$ knows the clusters of its neighbors and their messages.
(2) Convergecast phase: we convergecast $(s(A),m(A))$ towards $\rt$ on the BFS tree $\tau$. Each vertex $v$ that received all messages from its children in $\tau$ for a cluster $A$, will only forward the one with maximum $m(A)$. Therefore each vertex will forward only $|\mathcal{C}_i|$ messages, and we can pipeline all the messages of the second phase in $O(|\mathcal{C}_i|+D)$ rounds.
(3) Broadcast phase: the root $\rt$ broadcasts all the new messages $(s(A),m(A))$ for all clusters $A\in {\cal C}_i$ to all the graph in $O(|\mathcal{C}_i|+D)$ rounds.

After $k$ such rounds we add edges to the spanner by a
convergecast of all the spanner edges towards $\rt$ using $\tau$. Let $\mathcal{H}_i$ be the spanner of $\mathcal{G}_i$. In \cite{EN17spanner} it is shown that in expectation $|\mathcal{H}_i|=O(|\mathcal{C}_i|^{1+\frac1k})$.
Consider a vertex $v\in A$. For every cluster $B$ such that $\{A,B\}\in\mathcal{H}_i$ and there is a neighbor $u\in B$ of $v$, $v$ will send $((u,v),(A,B))$ towards $\rt$. On the other hand, each vertex receiving edges from  $A\times B$, will forward only a single such edge. After $O(|\mathcal{C}_i|^{1+\frac1k}+D)$ rounds the center $\rt$ knows $\mathcal{H}_i$, and for every edge $(A,B)\in\mathcal{H}_i$ it knows a representative $(a,b)\in A\times B$. In additional $O(|\mathcal{C}_i|^{1+\frac1k}+D)$ rounds $\rt$ broadcasts all these edges and $H_i$ is created accordingly. The total number of rounds to implement each iteration of \cite{EN17spanner} is
\[
\ensuremath{\ensuremath{O(|\mathcal{C}_{i}|^{1+\frac{1}{k}}+D)}\le\ensuremath{O\left(\left(n^{\frac{k}{2k+1}}\right)^{\frac{k+1}{k}}+D\right)=O\left(n^{\frac{1}{2}+\frac{1}{4k+2}}+D\right)}}~.
\]

\paragraph{Case 2: $\log_{1+\eps}(\eps\cdot n^{\frac{k}{2k+1}})<i\le \log_{1+\eps}(n)$.} Set $w_i=\frac{L}{(1+\eps)^i}$.
Similarly to the previous regime, we will partition the graph into clusters $\mathcal{C}_i$ with weak diameter $\eps\cdot w_i$. However, as the number of clusters will be large, computations will be done locally in the clusters. In order to make the local computations efficient, we will refine the partition into clusters such that each cluster will have bounded (weak) \hd.
We start by choosing cluster centers. A vertex $x_j\in\mL$ is a cluster center if one of the following conditions is fulfilled:
\begin{enumerate}
	\item There is an integer $s$ such that $R_{x_{j-1}}<s\cdot (\eps\cdot w_i)\le R_{x_{j}}$.
	\item $j$ is a multiple of $\left\lceil \frac{\eps\cdot n}{(1+\eps)^{i}}\right\rceil$ (that is, there is an integer $q$ such that $j=q\cdot\left\lceil \frac{\eps\cdot n}{(1+\eps)^{i}}\right\rceil$).
\end{enumerate}
Note that $x_0$ is a center. For every vertex $x_b\in\mL$, consider the closest center $x_a$ left of $x_b$ (w.r.t $\mL$). It holds that $R_{x_b}-R_{x_a}< \eps\cdot w_i$ and $b-a< \frac{\eps\cdot n}{(1+\eps)^{i}}$.
Moreover, the total number of centers is bounded by $\frac{L}{\eps\cdot w_{i}}+\frac{n}{\frac{\eps\cdot n}{(1+\eps)^{i}}}=\frac{2\cdot(1+\eps)^{i}}{\eps}$. In particular, each vertex can compute whether it is a center locally.
For every vertex $v\in V$, pick an arbitrary $x_j\in\mL(v)$, and let $j'\le j$ be the largest such that $x_{j'}$ is a center. Then $v$ joins the cluster $C(x_{j'})$ of $x_{j'}$.
If $x_a,x_b$ are two consecutive cluster centers, then $I(x_a)=\{x_a,x_{a+1},x_{a+2},\dots,x_{b-1}\}$ will be the communication interval for the cluster $C(x_a)$. Note that $C(x_a)\subseteq I(x_a)$ (they need not be equal, since each vertex $u\in V$ has several possible representatives in $\mL(u)$).

Note that the \hd of $I(x_a)$ is bounded by $\frac{\eps\cdot n}{(1+\eps)^{i}}\le n^{\frac{1}{2}+\frac{1}{4k+2}}$, and also for any $u,v\in C(x_a)$ we have $d_T(u,v)\le\eps\cdot w_i$.
Each vertex $v\in V$ belongs to a single cluster. However, $v$ might belong to many communication intervals. Nevertheless, every MST edge appears twice in $\mL$, and therefore it belongs to at most two communication intervals.

In order to enable the partition to clusters, each cluster center $x_a$ declares itself via $I(x_a)$. That is, it sends to the right neighbor (on $\mL$) a message declaring itself, which is forwarded until it reaches the next center $x_b$. This declaration takes $\frac{\eps\cdot n}{(1+\eps)^{i}}$ rounds. At the end, each vertex chooses to which cluster it joins, becomes aware of all the communication intervals it belongs to, and sends its cluster i.d. to all its neighbors.

Now that we have defined the clustering, the simulation of each iteration of \cite{EN17spanner} is done in essentially the same manner as the previous case, with the communication interval taking the role of the global BFS tree $\tau$. That is, in parallel for every cluster $C(x_a)$ we find the maximum over the $m(v)$ by convergecast in $I(x_a)$.
In the last round we convergecast the spanner edges touching the cluster $C(x_a)$, so we need a bound on that number. In \cite{EN17spanner} it is shown that w.h.p. every vertex (cluster) adds at most $O(|{\cal C}_i|^{1/k}\log n)=O(n^{1/k}\log n)$ edges to the spanner. So the number of rounds required for a simulation of a single iteration is at most
\[
O(n^{\frac1k}\log n+n^{\frac{1}{2}+\frac{1}{4k+2}})=O(n^{\frac{1}{2}+\frac{1}{4k+2}})~,
\]
(assuming $k>1$.)
The total number of rounds (for each $i$ in this range) is thus $O\left(k\cdot n^{\frac{1}{2}+\frac{1}{4k+2}}\right)$. This concludes the second case.

Our final spanner $H$ will be a union of the MST $T$, with the spanner $H'$ of $G'$, and with the spanners $H_i$ for all $0\le i\le \lceil\log_{1+\eps}n\rceil$.
As the there are $O(\log_{1+\eps}n)$ different scales, we conclude that the total construction of the spanner $H$ of $G$ takes $\tilde{O}(n^{\frac{1}{2}+\frac{1}{4k+2}}+D)$ rounds.

\subsection{Analysis}
In this section we finish the proof of \Cref{thm:spanner} by analyzing the  stretch, lightness, and sparsity of the spanner $H$.

\paragraph{Stretch.}
By the triangle inequality, it suffices to show that for every edge $\{u,v\}=e\in E$, it holds that $d_H(u,v)\le (2k-1)(1+\eps)w(e)$. In fact, we will show a bound of $(2k-1)(1+O(\eps))$ on the stretch. This can be fixed later by rescaling $\eps$.
Fix $\{u,v\}=e\in E$. We can assume that $w(e)\le L$, as otherwise we fulfill the requirement using the MST edges only. If $e\in E'$, then $d_H(u,v)\le d_{H'}(u,v)\le (2k-1)\cdot w(e)$.
Otherwise, let $i\ge 0$ such that $e\in E_i$, that is  $\frac{w_i}{(1+\eps)}<w(e)\le w_i$ for $w_i= \frac{L}{(1+\eps)^{i}}$.
Let $A_u,A_v\in\mathcal{C}_i$ be the clusters containing $u,v$ respectively. If $A_u=A_v$, then $d_H(u,v)\le d_{T}(u,v)\le \eps\cdot w_i\le w(e)$ (assuming $\eps<1/2$, say).
Otherwise, $\{A_u,A_v\}$ is an edge of $G_i$, and therefore there is a path  $A_u=A_0,A_1,\dots,A_t=A_v$  between $A_u,A_v$ in $\mathcal{H}_i$ where $t\le 2k-1$. In particular, for every $0\le j<t$, we added some edge $\{v_j,u_{j+1}\}\in A_j\times A_{j+1}\cap E_i$ to $H_i$. Let $u_0=u$ and $v_t=v$.
As the distance between every pair of vertices in any cluster is bounded by $\eps\cdot w_i$ and the weight of all the edges in $H_i$ is bounded by $w_i$ we conclude
\begin{align*}
d_{H}(u,v)\le d_{H_{i}\cup T}(u_{0},v_{t}) & \le d_{T}(u_{0},v_{0})+\sum_{j=0}^{t-1}\left(w(v_{j},u_{j+1})+d_{T}(u_{j+1},v_{j+1})\right)\\
& \le(t+1)\cdot\eps\cdot w_{i}+t\cdot w_{i}\\
& \le(2k-1)\cdot(1+O(\eps))\cdot w(e)~.
\end{align*}

\paragraph{Lightness.} We bound the lightness of $H'$ and each of the spanners $H_i$. First consider $H'$. Since the weight of every edge $e\in E'$ is at most $L/n$, we have $$w(H')\le |H'|\cdot \frac Ln=O(k\cdot n^{1+\frac1k}\cdot\frac Ln)=O(k\cdot n^{\frac1k}\cdot L)~.$$

Next consider $H_i$, which has expected $O(|\mathcal{C}_i|^{1+\frac1k})=O\left(\left(\frac{(1+\eps)^{i}}{\eps}\right)^{1+\frac{1}{k}}\right)$ edges, all of weight bounded by $w_i=\frac{L}{(1+\eps)^i}$. So the expected weight of all these $H_i$ together is
\begin{align*}
\sum_{i=0}^{\left\lceil \log_{1+\eps}n\right\rceil }\E\left[w(H_{i})\right] & \le\sum_{i=0}^{\left\lceil \log_{1+\eps}n\right\rceil }\E[|H_{i}|]\cdot w_{i}=\sum_{i=0}^{\left\lceil \log_{1+\eps}n\right\rceil }O\left(\left(\frac{(1+\eps)^{i}}{\eps}\right)^{1+\frac{1}{k}}\right)\cdot\frac{L}{(1+\eps)^{i}}\\
& =O\left(\frac{L}{\eps^{1+1/k}}\right)\cdot\sum_{i=0}^{\left\lceil \log_{1+\eps}n\right\rceil }(1+\eps)^{\frac{i}{k}}=O\left(\frac{L}{\eps^{1+1/k}}\right)\cdot\frac{(1+\eps)^{\frac{\left\lceil \log_{1+\eps}n\right\rceil+1}{k}}-1}{(1+\eps)^{1/k}-1}\\
&=O\left(\frac{L\cdot k\cdot n^{1/k}}{\eps^{2+1/k}}\right)~,
\end{align*}
where the last equality follows as
$(1+\eps)^{\frac{1}{k}}-1\ge e^{\frac{\eps}{2}\cdot\frac{1}{k}}-1\ge\frac{\eps}{2k}$.
We conclude that the expected weight of $H$ is
\[
\E[w(H)]\le w(T)+w(H')+\sum_{i=0}^{\left\lceil \log_{1+\eps}n\right\rceil }\E[w(H_{i})]=O_\eps\left(k\cdot n^{\frac{1}{k}}\cdot L\right)~.
\]

\paragraph{Sparsity.}
Following the analysis of the lightness, we have
\[
\sum_{i=0}^{\left\lceil \log_{1+\eps}n\right\rceil }\E[|H_{i}|]\le\sum_{i=0}^{\left\lceil \log_{1+\eps}n\right\rceil }O\left(\left(\frac{(1+\eps)^{i}}{\eps}\right)^{1+\frac{1}{k}}\right)=O\left(\frac{1}{\eps^{1+\frac1k}}\cdot\frac{n^{1+\frac{1}{k}}}{(1+\eps)^{1+\frac{1}{k}}-1}\right)=O\left(\frac{n^{1+\frac{1}{k}}}{\eps^{2+\frac1k}}\right)~,
\]
We conclude,
\[
\E[|H|]\le|T|+|H'|+\sum_{i=0}^{\left\lceil \log_{1+\eps}n\right\rceil }\E[|H_{i}|]=O_\eps\left(k\cdot n^{1+\frac{1}{k}}\right)~.
\]

\begin{remark}
We note that the number of edges mostly comes from the spanner $H'$. We can in fact use the techniques developed here in order to efficiently implement the algorithm of \cite{EN17spanner} for weighted graphs in the CONGEST model, which provides a $(2k-1)\cdot(1+\eps)$-spanner with $O_\eps(\log k\cdot n^{1+1/k})$ edges. That algorithm partitions the edges $E$ to $\approx\log k$ sets, and for each set, applies the unweighted version on a cluster graph. Since we already have an efficient distributed implementation of that unweighted algorithm, we conclude that our sparsity bound may be improved to $O_\eps(\log k\cdot n^{1+1/k})$. We leave the details to the full version.
\end{remark}

\paragraph{Successes Probability.}
Note that once the computation concludes, we can easily compute the size and lightness of the spanner in $O(D)$ rounds via the BFS tree $\tau$. Thus we can repeat the computation for $H'$ and each $H_i$ until they meet the required bounds, which will happen w.h.p. after at most $O(\log n)$ tries. Recall that the stretch bound is guaranteed to hold.

\section{Distributed Construction of Nets}\label{sec:nets}

In this section we devise an efficient distributed algorithm for computing nets in general graphs. Let $G=(V,E,w)$ be a weighted graph. Recall that for $\alpha>0$, a set $N\subseteq V$ is $\alpha$\emph{-covering } if for every vertex $x\in V$ there is $y\in N$ with $d_G(x,y)\le \alpha$.
A set $N\subseteq V$ is $\beta$\emph{-separated} if for every $x,y\in N$, $d_G(x,y)> \beta$.
We say that a set $N$ is an $(\alpha,\beta)$\emph{-net} if it is both $\alpha$-covering and $\beta$-separated.
(In the literature nets are often defined with $\alpha=\beta$, but we will need the more general definition, since we will only be able to provide nets with $\alpha>\beta$.)

Our construction of nets is inspired by the MIS (maximal independent set) algorithm of \cite{MRSZ11} (which itself is inspired by \cite{L86}).
The MIS algorithm works in $O(\log n)$ rounds, where in each round a permutation is sampled. A vertex joins the MIS iff it is local minimum (i.e., it appears before all its neighbors in the permutation). We will also sample a permutation. However, instead of checking only the neighbors, a vertex $v$ will join the net iff it is a local minimum in a geometric sense. I.e., $v$ appears before all the vertices of $B_G(v,\beta)$ in the permutation.

In order to implement this algorithm efficiently we will require several tools that have found distributed constructions recently, such as {\em Least-Element lists} and {\em shortest path trees}. However, we do not know how to compute these exactly in the allotted number of rounds, so we will settle for approximations.
The rest of the section is dedicated to proving the following theorem.
\begin{theorem}\label{lem:net}
	Given a weighted graph $G=(V,E,w)$ with \hd $D$ and parameters $\Delta>0$, $\delta\in(0,1)$, there is a randomized algorithm in the CONGEST model that computes  w.h.p. a $\left((1+\delta)\cdot\D,\frac{\D}{1+\delta}\right)$-net in $(\sqrt{n}+D)\cdot 2^{\tilde{O}(\sqrt{\log n}\cdot\log(1/\delta))}$ rounds.
\end{theorem}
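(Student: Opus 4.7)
The plan is to realize the iterative Luby-style procedure sketched in the overview. We maintain an active set $A\subseteq V$ initialized to $V$, and in each of $O(\log n)$ phases (i) a fresh random priority $\pi:V\to\{0,1\}^{\Theta(\log n)}$ is sampled by having every vertex draw an independent random id, (ii) active vertices that are local minima of $\pi$ in their approximate $\beta$-ball join $N$, and (iii) every active vertex within approximate distance $\alpha$ of a new net point is removed from $A$. We set $\alpha:=(1+\delta)\Delta$ and $\beta:=\Delta/(1+\delta)$, so that $\alpha/\beta=(1+\delta)^2$ leaves a multiplicative slack of $1+\epsilon$ with $\epsilon=\Theta(\delta)$ available for approximate distance computations.

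Concretely, in each phase I would invoke the approximate Least-Element list algorithm of \cite{FL16} with approximation parameter $1+\epsilon$, so that each active $v$ learns its LE-list w.r.t.\ an auxiliary graph $H$ satisfying $d_G\le d_H\le(1+\epsilon)d_G$. A vertex $v$ declares itself in $N$ iff in its LE-list the only entry at $d_H$-distance $\le(1+\epsilon)\beta$ is $v$ itself: this certifies $v=\mathop{\mathrm{argmin}}_{u\in B_G(v,\beta)}\pi(u)$, since any $u\in B_G(v,\beta)$ satisfies $d_H(v,u)\le(1+\epsilon)\beta$ and would otherwise dominate $v$ in the LE-list. After the joining step, I would run an $\alpha$-bounded multi-source approximate SSSP from $N$ using the same hopset-based LE machinery and deactivate every $v\in A$ with approximate distance at most $\alpha$ to some newly added net point; termination ($A=\emptyset$) is detected by a single convergecast on the BFS tree $\tau$ in $O(D)$ rounds per phase.

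For correctness: separation within a single phase follows because two vertices $u\ne v$ joining simultaneously would both be local minima of $\pi$ in overlapping $\beta$-balls, impossible if $d_G(u,v)\le\beta$; across phases, if $v$ joined earlier and $u$ later were still active at the time, then $u$ escaped deactivation so $d_G(u,v)>\alpha/(1+\epsilon)\ge\beta$. For covering: once deactivated, $v$ has a net point within $d_H$-distance $\alpha$, hence within $d_G$-distance $\alpha=(1+\delta)\Delta$. Termination in $O(\log n)$ phases w.h.p.\ follows by a standard Luby-style argument applied to a maximal $\beta$-separated subset of $A$: in every phase the minimum-$\pi$ vertex of any such subset joins $N$ and deactivates a constant fraction of its approximate $\beta$-ball, so the set of potential centers shrinks geometrically; repeating the whole construction $O(1)$ times and taking the union of independent trials boosts the success probability if needed.

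The main obstacle, and the dominant cost, is the approximate LE-list subroutine: \cite{FL16} costs $(\sqrt{n}+D)\cdot 2^{\tilde{O}(\sqrt{\log n}\cdot\log(1/\epsilon))}$ rounds per call with approximation $1+\epsilon$, and we invoke it $O(\log n)$ times for the local-min test plus $O(\log n)$ times for the bounded multi-source SSSP; since $\epsilon=\Theta(\delta)$, summing yields the claimed $(\sqrt{n}+D)\cdot 2^{\tilde{O}(\sqrt{\log n}\cdot\log(1/\delta))}$ bound. The subtle point I would have to wire carefully is that the two-sided approximation must serve both purposes at once: the covering certificate requires $d_H\ge d_G$ so that an $H$-ball of radius $\alpha$ lies inside a $G$-ball of radius $\alpha$, while the separation certificate uses $d_H\le(1+\epsilon)d_G$ so that any $G$-near neighbor appears in the $H$-LE-list; one needs both inequalities threaded through every step without losing the factor $(1+\delta)^2$ between $\alpha$ and $\beta$.
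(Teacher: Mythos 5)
Your construction is essentially the paper's: the same iterative Luby-style scheme in which each phase samples a random priority, computes approximate LE lists via \cite{FL16} on an auxiliary graph $H$ with $d_G\le d_H\le(1+\epsilon)d_G$, lets local minima in their (approximate) $\beta$-balls join the net, and deactivates everything within approximate distance $\alpha$ via a bounded approximate SSSP from the new net points. Your covering and separation arguments, including the careful use of both sides of the approximation and the $(1+\delta)^2$ slack between $\alpha$ and $\beta$, match the paper's, and the round-complexity accounting is the same.

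The genuine gap is in the termination argument. Your claim that ``in every phase the minimum-$\pi$ vertex of any maximal $\beta$-separated subset of $A$ joins $N$'' is false: such a vertex can be dominated by an active vertex \emph{outside} the chosen subset that lies within its $\beta$-ball and has smaller priority, in which case it does not join. Only the globally minimum active vertex is guaranteed to join, and one guaranteed join per phase gives no geometric decay; moreover ``deactivates a constant fraction of its $\beta$-ball'' conflates the ball that blocks joining with the (full) $\alpha$-ball that gets deactivated, and neither statement implies that the number of remaining candidates shrinks by a constant factor. The paper's argument is genuinely different here and is needed: it counts active \emph{pairs} $\{u,v\}$ with $d_G(u,v)\le\Delta$, defines the event that $w$ ``kills'' $v$ (i.e., $w$ is first in the permutation among $A_i(v)\cup A_i(w)$, which forces $w$ to join and $v$ to be deactivated), and shows via the identity $\Pr[w\text{ kills }v]=|A_i(v)\cup A_i(w)|^{-1}$ that the expected number of active pairs at least halves each iteration; a Markov-plus-Chernoff step then gives $O(\log n)$ iterations w.h.p. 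Repeating the whole construction $O(1)$ times, as you suggest, does not substitute for this: without a per-phase progress guarantee you have no bound on the number of phases in any single trial. You would need to import the pairwise counting argument (or an equivalent Luby/M\'etivier-style potential) to close this step.
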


\paragraph{Least Element Lists.}
LE lists were introduced by \cite{Coh97}:
\begin{definition}
	Given a weighted graph $G=(V,E,w)$, a set $A\subseteq V$ of vertices, and a permutation $\pi:A\rightarrow[|A|]$ on $A$, the LE list of a vertex $v\in A$ is defined as
	\[
	\text{LE}_{G,A,\pi}(v)=\left\{(u,d_G(u,v))~:~ u\in A,\nexists w\in A\text{ s.t. }d_{G}(v,w)\le d_{G}(v,u)\text{ and }\pi(w)<\pi(u)\right\} ~.
	\]
	In words, a vertex $u\in A$ joins $\text{LE}_{G,A,\pi}(v)$, the LE list of $v$, if $u$ is first in the permutation among all the vertices at distance at most $d_G(v,u)$ from $v$ (alternatively, $u$ is the closest vertex to $v$ among the first $\pi(u)$ vertices in the permutation.).
\end{definition}
Khan et. al. \cite{KKMPT12} showed that with high probability over the choice of the permutation $\pi$, it holds that  $|\text{LE}_{G,A,\pi}(v)|=O(\log |A|)$ simultaneously for all the vertices $v\in A$.
Using hopsets, Friedrichs and Lenzen \cite{FL16} were able to efficiently compute LE lists in the CONGEST model (improving upon Ghaffari and Lenzen \cite{GL14}) for a graph $H$ that is a good approximation of $G$. (We remark that their algorithm was given in the case $A=V$, but it is a simple adaptation to adjust it to the more general case.)
\begin{theorem}[\cite{FL16}]\label{thm:FL18}
	Consider a graph $G=(V,E)$ with $n$ vertices and \hd $D$, a set $A\subseteq V$, and let $\delta\in(0,1)$ be any parameter. There is a randomized algorithm in the CONGEST model that uniformly samples a permutation $\pi$ and computes $\{\text{LE}_{H,A,\pi}(v)\}_{v\in V}$ for a graph $H$ such that $d_G(u,v)\le d_H(u,v)\le (1+\delta)\cdot d_G(u,v)$. The algorithm is successful w.h.p., and the number of rounds is  $(\sqrt{n}+D)\cdot2^{\tilde{O}(\sqrt{\log n}\cdot\log(1/\delta))}$.
\end{theorem}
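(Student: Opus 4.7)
The plan is to mimic the Luby/MRSZ-style MIS algorithm, where the role of ``graph neighborhood'' is played by a ball of radius $\Delta$. The algorithm maintains an active set $A \subseteq V$ (initially $A=V$) and a net $N \subseteq V$ (initially empty), and proceeds in $O(\log n)$ phases. In each phase we (i) sample a uniformly random permutation $\pi$ on $A$ and compute LE lists via \theoremref{thm:FL18}, (ii) add to $N$ every vertex that is locally $\pi$-minimum in its $\Delta$-ball as measured in the approximating graph $H$, and (iii) deactivate every vertex that is within $H$-distance $\Delta$ of some current net point, using a second invocation of \theoremref{thm:FL18}. Throughout, we fix the approximation parameter $\delta' = \delta/3$, so that each invocation returns a graph $H$ with $d_G \le d_H \le (1+\delta') d_G$.

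\textbf{Carrying out each phase.} For step (ii), recall that $v$'s LE list is ordered by increasing distance and (non-strictly) decreasing $\pi$-value and always contains the entry $(v,0)$; hence $v$ is the $\pi$-minimum of $B_H(v,\Delta) \cap A$ iff $v$'s LE list contains no entry $(u,d)$ with $0 < d \le \Delta$, which is a purely local test. For step (iii), we invoke \theoremref{thm:FL18} a second time with the ``$A$''-argument equal to the current $N$, so every vertex $v \in V$ learns its closest net point in $H$ together with the corresponding $H$-distance; we remove from $A$ every $v$ whose $H$-distance to $N$ is at most $\Delta$. Each of the two invocations costs $(\sqrt n + D)\cdot 2^{\tilde O(\sqrt{\log n}\cdot\log(1/\delta))}$ rounds.

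\textbf{Correctness.} For separation, suppose $u, v \in N$ are both net points. If they were selected in the same phase then each is $\pi$-minimum in its own $B_H(\cdot,\Delta)$, so $d_H(u,v) > \Delta$ and hence $d_G(u,v) > \Delta/(1+\delta') \ge \Delta/(1+\delta)$. If $u$ was selected in an earlier phase, then at $v$'s selection $v$ was still active, meaning its $H$-distance to $N$ (and in particular to $u$) exceeded $\Delta$; the same calculation yields $d_G(u,v) > \Delta/(1+\delta)$. For covering, when the algorithm terminates every vertex has been deactivated, so every $v \in V$ has $d_H(v, N) \le \Delta$, which translates to $d_G(v, N) \le \Delta \le (1+\delta)\Delta$ as required.

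\textbf{Main obstacle: bounding the number of phases.} The delicate part is proving that $O(\log n)$ phases suffice w.h.p., so that the total round count is $O(\log n) \cdot (\sqrt n + D)\cdot 2^{\tilde O(\sqrt{\log n}\cdot\log(1/\delta))} = (\sqrt n + D)\cdot 2^{\tilde O(\sqrt{\log n}\cdot\log(1/\delta))}$. Condition on the graph $H$ produced in a given phase and consider the auxiliary ``conflict graph'' $G^\star_H$ on $A$ whose edges are the pairs within $H$-distance $\Delta$. In this phase the procedure is precisely Luby's MIS algorithm on $G^\star_H$: a vertex joins $N$ iff it is the $\pi$-minimum of its closed $G^\star_H$-neighborhood, and a vertex is deactivated iff a $G^\star_H$-neighbor (or itself) joins. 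The standard Luby analysis gives a constant expected shrinkage of $|E(G^\star_H)|$ per phase; after $O(\log n)$ phases the conflict graph is edgeless w.h.p., and every remaining active vertex then joins $N$ in one further phase. A union bound over the $O(\log n)$ invocations of \theoremref{thm:FL18} ensures all LE lists are correctly computed, completing the proof.
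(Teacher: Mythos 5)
You have not proved the statement you were asked to prove. The statement (Theorem~\ref{thm:FL18}) asserts that, given $G$, a subset $A\subseteq V$, and a parameter $\delta$, a uniformly random permutation $\pi$ on $A$ can be sampled and the LE lists $\{\text{LE}_{H,A,\pi}(v)\}_{v\in V}$ computed in $(\sqrt n + D)\cdot 2^{\tilde O(\sqrt{\log n}\cdot\log(1/\delta))}$ rounds, for some $(1+\delta)$-distance-approximating graph $H$. A proof of this result would have to explain how to sample $\pi$ distributedly, how to construct (and route over) a hopset-based overlay $H$ that approximates $G$-distances, and how Bellman--Ford-type exploration over $H$ produces LE lists for all $v$ within the stated round bound. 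None of this appears in your write-up. Instead, you treat Theorem~\ref{thm:FL18} itself as a black-box subroutine (your step (i) literally invokes it), which makes your argument circular as a proof of that theorem.

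What you actually wrote is a correct and reasonably detailed proof of a \emph{different} result, namely Theorem~\ref{lem:net} (the distributed net construction): a Luby/MRSZ-style iterative algorithm that uses the LE-list machinery of Theorem~\ref{thm:FL18} to select $\pi$-local minima in $\Delta$-balls and an approximate-SPT (or second LE-list invocation) to deactivate covered vertices, with the potential-function argument on active pairs showing $O(\log n)$ phases suffice w.h.p. That argument mirrors the paper's Section~\ref{sec:nets} and is essentially sound. But it is not a proof of Theorem~\ref{thm:FL18}; note also that Theorem~\ref{thm:FL18} is imported from~\cite{FL16} and the paper does not reprove it, so if you intended to reconstruct a proof you would need to follow the hopset-based construction of Friedrichs and Lenzen rather than the net algorithm that consumes it.
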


\paragraph{Algorithm.}
Here we describe the algorithm promised in \theoremref{lem:net}.
The algorithm will run in $O(\log n)$ iterations. Initially, set $A_1=V$ the set of active vertices, and $N=\emptyset$. We denote by $A_i$ the set of active vertices for the $i$'th iteration, and by $N_i$ the net just after the $i$'th iteration (so $N_0=\emptyset$).
	In the $i$'th iteration, apply \theoremref{thm:FL18} on the graph $G$ with the set $A_i$ and the parameter $\delta$. We obtain a (uniformly random) permutation $\pi_i$ on $A_i$, alongside with LE lists for all $v\in A_i$ w.r.t $\pi_i$ and the graph $H_i$ (which is a $1+\delta$ approximation of $G$).
	Every $v\in A_i$ will join $N_{i}$ iff it is the first in the permutation order among its $\D$-neighborhood (w.r.t $H_i$). In other words, $v$ joins $N_i$ if there is no $u\in\text{LE}_{H_i,A_i,\pi_i}(v)$, $u\neq v$, such that $d_{H_i}(u,v)\le\D$.
	
	Next, we will construct an $1+\delta$ approximate shortest path tree $T_{i}$ (using \cite{BKKL17}, say) rooted in $N_i$. Remove from $A_i$ every vertex at distance at most $(1+\delta)\cdot\D$ from $N_i$ (in $T_{i}$), to form $A_{i+1}$. This concludes a single iteration. Continue until iteration $i$ where $A_i=\emptyset$.
	
\paragraph{Running time.}
The number of rounds for computing the LE lists is $(\sqrt{n}+D)\cdot2^{\tilde{O}(\sqrt{\log n}\cdot\log(1/\delta))}$, and the shortest path tree takes $\tilde{O}((\sqrt{n}+D)/\poly(\delta))$ rounds. (Deciding whether to join $N_i$ is done locally once the LE lists are computed.)  We will soon show that w.h.p. there are $O(\log n)$ iterations, thus the total number of rounds is $(\sqrt{n}+D)\cdot2^{\tilde{O}(\sqrt{\log n}\cdot\log(1/\delta))}$.

\paragraph{Analysis.}
First we argue that once the algorithm concludes, $N$ is a $\left((1+\delta)\cdot\D,\frac{\D}{1+\delta}\right)$-net.
To see the packing property, consider $u,v\in N$, and we want to show that $d_G(u,v)>\frac{\D}{1+\delta}$.
If $u$ and $v$ joined $N$ in the same iteration $i$, then they both were maximal in their $\D$ neighborhoods w.r.t $H_i$, and therefore $d_G(u,v)\ge\frac{ d_{H_i}(u,v)}{1+\delta} >\frac{\D}{1+\delta} $.
Otherwise, assume w.l.o.g that $u$ joined $N$ at iteration $i$, while $v$ joined $N$ at iteration $i'>i$. As $v$ remained active, necessarily $(1+\delta)\cdot\D<d_{T_{i}}(v,N_i)\le(1+\delta)\cdot d_{G}(u,v)$.
For the covering property, let $v$ be some vertex that becomes inactive at iteration $i$. Then there is a vertex $u\in N$ such that $d_G(u,v)\le d_{T_{i}}(u,v)\le(1+\delta)\cdot\D$.

It remains to show that after $O(\log n)$ iterations w.h.p. no active vertices remain.
We say that a pair of vertices $\{u,v\}$ at distance at most $\D$ (w.r.t. $d_G$) is \emph{active} if both $u,v$ are active.
Set $\mathcal{E}_{0}=\left\{ \{u,v\}\mid d_{G}(u,v)\le\D\right\} $ to be the set of initially active pairs.  $\mathcal{E}_i$ will denote the set of pairs at distance at most $\D$ that remain active after the $i$'th iteration.
For an active vertex $v\in A_i$, denote by $A_i(v)=A_i\cap B_G(v,\D)$. Note that $u\in  A_i(v) \iff \{v,u\}\in\mathcal{E}_{i-1}$.
We say that $w\in A_i(v)$ \emph{kills} $v$ at the $i$'th iteration if $w$ is first in the permutation among $ A_i(v)\cup A_i(w)$. Suppose that $w$ kills $v$ in the $i$'th iteration. Then we claim that $w$ joins $N_i$, as for every vertex $w'\in A_i$ such that $d_{H_i}(w,w')\le \Delta$ it holds that $d_G(w,w')\le \Delta$ as well, and therefore $\pi(w)< \pi(w')$. In particular, $v$ will cease to be active as $d_{T_{i}}(v,N_i)\le (1+\delta)\cdot d_{G}(v,w)\le (1+\delta)\cdot\D$.
Note that at most one vertex $w$ can kill $v$ (as it must be the first in the permutation in $ A_i(v)$).
\footnote{Note also that it is possible for a vertex to cease to be active without being killed. This can happen if the first vertex in the permutation among $A_i(v)$ does not join $N_i$, while a different vertex in $A_i(v)$ does join. We do not take advantage of this possibility.} Therefore the probability that $v$ becomes inactive in iteration $i$ is at least $\sum_{w\in A_i}\Pr\left[v\text{ is killed by }w\right]$.
As we pick the permutation $\pi$ uniformly at random, the probability that $w$ kills $v$ is exactly $\left| A_i(v)\cup A_i(w)\right|^{-1}$.

Our next goal is to show that in expectation, the number of active pairs is halved in each iteration. Consider a pair $\{u,v\}\in\mathcal{E}_{i-1}$. Then
\begin{align*}\Pr\left[\left\{ u,v\right\} \notin\mathcal{E}_{i}\right] & \ge\frac{1}{2}\cdot\left(\Pr\left[u\text{ became inactive}\right]+\Pr\left[v\text{ became inactive}\right]\right)\\
& \ge\frac{1}{2}\cdot\sum_{w\in A_i}\Pr\left(\left[w\text{ kills }u\right]+\Pr\left[w\text{ kills }v\right]\right)~.
\end{align*}
$\mathcal{E}_{i-1}\setminus\mathcal{E}_{i}$ is the set of pairs who cease to be active in the $i$'th iteration.
We conclude	
\begin{align*}
\mathbb{E}\left[\left|\mathcal{E}_{i-1}\setminus\mathcal{E}_{i}\right|\right] & =\sum_{\left\{ u,v\right\} \in\mathcal{E}_{i-1}}\Pr\left[\left\{ u,v\right\} \notin\mathcal{E}_{i}\right]\\
& \ge\frac{1}{2}\cdot\sum_{\left\{ u,v\right\} \in\mathcal{E}_{i-1}}\sum_{w\in A_i}\Pr\left(\left[w\text{ kills }u\right]+\Pr\left[w\text{ kills }v\right]\right)\\
& =\frac{1}{2}\sum_{u\in A_i}\sum_{w\in A_i(u)}\Pr\left[w\text{ kills }u\right]\cdot\left| A_i(u)\right|\\
& =\frac{1}{2}\sum_{u\in A_i}\sum_{w\in A_i(u)}\frac{\left| A_i(u)\right|}{\left| A_i(u)\cup A_i(w)\right|}\\
& \ge\frac{1}{2}\sum_{u\in A_i}\sum_{w\in A_i(u)}\frac{\left| A_i(u)\right|}{\left| A_i(u)\right|+\left| A_i(w)\right|}\\
& =\frac{1}{2}\sum_{\left\{ u,w\right\} \in\mathcal{E}_{i-1}}\frac{\left| A_i(u)\right|+\left| A_i(w)\right|}{\left| A_i(u)\right|+\left| A_i(w)\right|}=\frac{1}{2}\cdot\left|\mathcal{E}_{i-1}\right|~.
\end{align*}

Set $p=\Pr\left[\left|\mathcal{E}_{i-1}\setminus\mathcal{E}_{i}\right|>\frac{1}{4}\cdot\left|\mathcal{E}_{i-1}\right|\right]$.
Then $\frac{1}{2}\cdot\left|\mathcal{E}_{i-1}\right|\le\mathbb{E}\left[\left|\mathcal{E}_{i-1}\setminus\mathcal{E}_{i}\right|\right]\le p\cdot\left|\mathcal{E}_{i-1}\right|+(1-p)\cdot\frac{1}{4}\cdot\left|\mathcal{E}_{i-1}\right|$,
which implies $p\ge\frac{1}{3}$.
After $O(\log n)$ iterations, by Chernoff inequality we had w.h.p. at least $\log_{\frac43} n$ iterations where $\left|\mathcal{E}_{i-1}\setminus\mathcal{E}_{i}\right|>\frac{1}{4}\cdot\left|\mathcal{E}_{i-1}\right|$, and therefore $\mathcal{E}_{O(\log n)}=\emptyset$. Note that if no active edges remain, then each active vertex is the only active vertex in its entire $\D$ neighborhood. In particular, in the next iteration all the active vertices will join $N$ and cease to be active, as required.

\section{Light Spanner for Doubling Metrics}\label{sec:doubling}

In this section we show a distributed algorithm that produces light spanners for doubling metrics.

\begin{theorem}[Light Spanner for Doubling Graphs]\label{thm:doubling}
	There is an randomized distributed algorithm
	in the CONGEST model, that given a weighted graph $G=(V,E,w)$ with $n$ vertices, \hd $D$, doubling dimension $\ddim$, and parameter $\eps\in(0,1)$, in  $(\sqrt{n}+D)\cdot\eps^{-\tilde{O}(\sqrt{\log n}+\ddim)}$ rounds, w.h.p. returns a $(1+\eps)$-spanner $H$ with $O\left(n\cdot \eps^{-O(\ddim)}\cdot\log n\right)$ edges and lightness $\eps^{-O(\ddim)}\cdot \log n$.
\end{theorem}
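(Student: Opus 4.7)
The plan is to follow the standard net-based construction of doubling spanners, but implemented in CONGEST using the nets of \theoremref{lem:net} together with path-reporting hopsets. Since edge weights lie in $[1,\poly(n)]$, there are $O(\log n)$ relevant scales $\D_i = 2^i$. For each scale I first invoke \theoremref{lem:net} with parameters $\D = c\,\eps\D_i$ and $\delta = \Theta(\eps)$ (for a suitable constant $c$) to obtain an $(\alpha_i,\beta_i)$-net $N_i$ with $\alpha_i,\beta_i = \Theta(\eps\D_i)$. Each invocation costs $(\sqrt{n}+D)\cdot 2^{\tilde O(\sqrt{\log n}\cdot\log(1/\eps))}$ rounds; summing over all $O(\log n)$ scales stays within the claimed bound.

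The main step is then, for each scale $\D_i$, to add to the spanner an approximate shortest path between every pair of net points $x,y\in N_i$ with $d_G(x,y)\le\D_i$. I plan to do this via a $\D_i$-bounded multi-source approximate shortest path computation from all sources in $N_i$ simultaneously, using the path-reporting $(1+\eps,\D_i)$-hopsets of \cite{EN16} so that the actual underlying $G$-paths (not merely the hop-edges) can be added to the spanner. Each net point $x\in N_i$ initiates a truncated Bellman--Ford / hopset-relaxation rooted at itself, propagating tokens $(x,\tilde d)$ up to distance $\D_i$; whenever two nearby sources $x,y$ meet at some vertex $v$, a path $x\rightsquigarrow v\rightsquigarrow y$ is reconstructed and its edges are added to $H$.

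The stretch bound is standard: for any edge $e=(u,v)\in E$ of weight $w\le\D_i$, the $(1+\eps)\D_i$-covering property of $N_i$ yields $x,y\in N_i$ with $d_G(u,x),d_G(v,y)=O(\eps\D_i)$, and then $d_H(u,v)\le d_G(u,x)+(1+\eps)d_G(x,y)+d_G(y,v)\le (1+O(\eps))w$; rescaling $\eps$ removes the constant. For lightness, at scale $\D_i$ the $\Theta(\eps\D_i)$-separation of $N_i$ gives, by the standard doubling packing argument, at most $\eps^{-O(\ddim)}$ net points within distance $\D_i$ of any given one, so $H_i$ contains $|N_i|\cdot\eps^{-O(\ddim)}$ paths each of weight $O(\D_i)$. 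A second packing/charging argument shows $|N_i|\cdot\eps\D_i = O(w(\MST))$ (each net point can be charged $\Theta(\eps\D_i)$ of MST weight via its nearest neighbor on the MST), so each scale contributes lightness $\eps^{-O(\ddim)}$ and summing over $O(\log n)$ scales yields $\eps^{-O(\ddim)}\log n$; the sparsity bound is analogous.

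The hard part will be the round complexity of the parallel multi-source SSSP computations, since each vertex may have to relay information on behalf of many sources. The key observation that unblocks this is that the same packing property bounds the load on any single vertex $v$: at scale $\D_i$, only the $\eps^{-O(\ddim)}$ net points of $N_i$ lying within distance $\D_i$ of $v$ can ever produce a token that $v$ needs to forward in the $\D_i$-bounded relaxation. Combined with the small hop-count guaranteed by the \cite{EN16} hopset (giving $(\sqrt n+D)\cdot\eps^{-\tilde O(\sqrt{\log n})}$ rounds per scale for bounded-hop SSSP), pipelining the $\eps^{-O(\ddim)}$ tokens through the hopset overlay yields a per-scale cost of $(\sqrt n+D)\cdot\eps^{-\tilde O(\sqrt{\log n}+\ddim)}$, which dominates the net construction cost and matches the statement after summing over the $O(\log n)$ scales. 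The remaining care needed is in extracting the actual $G$-paths corresponding to the hopset edges and convergecasting them into $H$ without violating congestion; this is handled by the path-reporting property of \cite{EN16} and again by the bounded load at each vertex.
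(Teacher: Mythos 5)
Your plan matches the paper's construction essentially step for step: per-scale nets from \theoremref{lem:net}, connections between nearby net points via the path-reporting hopsets of \cite{EN16}, the doubling packing property (\lemmaref{lem:doubling_packing}) bounding both the per-vertex congestion of the parallel bounded SSSP computations and the number of paths per net point, and the MST charging bound on $|N_i|$ for the lightness. The one imprecision is in your stretch bound, where you write $d_H(u,v)\le d_G(u,x)+(1+\eps)d_G(x,y)+d_G(y,v)$: the first and last terms must be distances in $H$, not $G$, and bounding them requires an induction over scales (the pairs $\{u,x\}$ and $\{v,y\}$ are at distance $O(\eps\D_i)\ll\D_i$ and hence covered by the induction hypothesis at lower scales), which is exactly how the paper closes this step.
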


Our spanner construction will go as follows. We take an $\epsilon\D$-net for every distance scale $\D$, and connect net points that are within distance $\D$ of each other. An efficient construction of nets was described in \sectionref{sec:nets}. To efficiently implement the net point connections, we use approximate shortest path computations based on {\em hopsets} from every net point. This will ensure that every such approximate shortest path has few edges, thus the running time can be controlled.

The following lemma gives the standard packing property of doubling metrics (see, e.g.,  \cite{GKL03}).
\begin{lemma} \label{lem:doubling_packing}
	Let $(M,\rho)$ be a metric space  with doubling dimension $\ddim$.
	If $S \subseteq M$ is a subset of points with minimum interpoint distance $r$
	that is contained in a ball of radius $R$, then
	$|S| \le \left(\frac{2R}{r}\right)^{O(\ddim)}$.
\end{lemma}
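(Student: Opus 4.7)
The plan is to apply the doubling property iteratively to cover the containing ball by balls of radius less than $r/2$, and then use the separation condition to argue that each small ball can contain at most one point of $S$.

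First I would fix the ball $B = B_M(x,R)$ of radius $R$ that contains $S$. By definition of doubling dimension, $B$ (which equals $B_M(x, 2\cdot(R))$ if we allow scaling, but more precisely $B_M(x,R)$ is a ball of radius $R$, hence a ball of radius $2\cdot(R/2)$) can be covered by $2^{\ddim}$ balls of radius $R/2$. Iterating $i$ times, $B$ can be covered by $2^{i\cdot\ddim}$ balls of radius $R/2^{i}$. I choose $i = \lceil \log_2(2R/r) \rceil$, which guarantees $R/2^i < r/2$.

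Next, since $S$ is $r$-separated (its minimum interpoint distance is $r$), any ball of radius strictly less than $r/2$ contains at most one point of $S$; otherwise two points of $S$ lying in such a ball would be at distance at most $2 \cdot (r/2) < r$, contradicting the separation. Combining with the previous step,
\[
|S| \;\le\; 2^{i\cdot\ddim} \;=\; 2^{\lceil \log_2(2R/r)\rceil\cdot\ddim} \;\le\; \left(\frac{2R}{r}\right)^{O(\ddim)},
\]
which is the desired bound.

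The argument is entirely routine; there is essentially no technical obstacle once one recalls the correct iterative application of the doubling definition. The only minor subtlety is writing the induction cleanly: the doubling definition says a ball of radius $2r'$ is covered by $2^{\ddim}$ balls of radius $r'$, so each iteration halves the radius and multiplies the number of covering balls by $2^{\ddim}$. One could phrase this as a short induction on $i$, or simply cite it as folklore; I would opt for a one-line statement that $i$ iterations of the doubling property give a cover of $B$ by $2^{i\cdot\ddim}$ balls of radius $R/2^i$, and then plug in $i = \lceil\log_2(2R/r)\rceil$.
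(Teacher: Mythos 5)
Your proof is the standard packing argument for doubling metrics, and it is essentially correct; the paper does not prove this lemma itself but cites it as folklore (``see, e.g., [GKL03]''), and the argument you give is the one that reference (and most others) use. One small slip: with $i=\lceil \log_2(2R/r)\rceil$ you only get $R/2^i \le r/2$, not strictly less — if $2R/r$ happens to be a power of $2$ then $R/2^i = r/2$ exactly, and two $S$-points in a closed ball of radius $r/2$ could be at distance exactly $r$, which does not violate $r$-separation. Taking one extra halving, $i=\lceil\log_2(2R/r)\rceil+1$, repairs this; it multiplies the count by another $2^{\ddim}$, which is absorbed by the $O(\ddim)$ in the exponent, so the stated bound still holds.
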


\subsection{Spanner Construction}
Let $L$ be the weight of the MST of $G$.
For every $\D\in\{1,1+\eps,(1+\eps)^2,\dots,(1+\eps)^{\log_{1+\eps}L}\}$ we compute a $(2\eps\cdot\Delta,\frac{\eps}{2}\cdot\Delta)$-net as in \theoremref{lem:net} (e.g., we can take $\delta=1/2$). Let $N_i$ be the net for $\D=(1+\eps)^i$, and for every net point $u\in N_i$, run in parallel a $2\D$-bounded $(1+\eps)$-approximate shortest path tree rooted at $u$, and add to the spanner $H$ a $(1+\eps)$-approximate shortest path from $u$ to all other net points $v\in N_i$ found within this distance bound.

\paragraph{Shortest paths via hopsets.}
To implement these bounded shortest paths computations, we use the algorithm of \cite{EN16} based on hopsets. A $(\beta,\eps)$-hopset $F$ for a graph $G'=(V',E')$, is a set of edges $F$ that do not reduce distances, and for every $u,v\in V'$
\[
d_{G'\cup F}^{(\beta)}(u,v)\le (1+\eps)\cdot d_{G'}(u,v)~,
\]
where $d^{(\beta)}(\cdot,\cdot)$ is the length of the shortest path containing at most $\beta$ edges.
The graph $G'$ is created by choosing the set $V'\subseteq V$ of size $\approx\sqrt{n}\ln n$ at random, so that w.h.p. it intersects every shortest path in $G$ of length at least $\sqrt{n}$. The edges $E'$ are the $\sqrt{n}$-bounded distances in $G$ between the vertices of $V'$. Fix $\beta=1/\eps^{\tilde{O}(\sqrt{\log n})}$. In \cite{EN16} it is shown how to compute a $(\beta,\eps)$ hopset for $G'$ of size $O(\sqrt{n}\cdot\beta^2)$ in $O((\sqrt{n}+D)\cdot\beta^2)$ rounds. Furthermore, that hopset  is {\em path reporting}, that is, there is a path $P_e$ in $G$ for every hopset edge $e\in F$, such that the length of $P_e$ is exactly $w(e)$, and every vertex in $P_e$ knows it lies on a path implementing $e$.

Once a hopset $F$ is computed for $G'$, computing $(1+\eps)$-approximate shortest paths in $G$ from a root $v\in V$ amounts to running $\beta$ iterations of Bellman-Ford in $G\cup E'\cup F$ 
In order to perform a single Bellman-Ford iteration, we first send messages over the edges of $E$ for $2\sqrt{n}$ rounds (to reach a vertex of $V'$, and then discover the relevant edges of $E'$). Second, every $u\in V'$ broadcasts its distance estimate to the root $d(v)$ to all the graph using the BFS tree $\tau$ of depth $D$ (to implement the hopset edges). Since there are only $O(\sqrt{n}\cdot\ln n)$ vertices in $V'$, this can be done in $O(\sqrt{n}\cdot\ln n+D)$ rounds via \lemmaref{lem:pipe}.

In our setting we would like to compute in parallel multiple source $2\D$-bounded approximate shortest paths. To this end, we will use the fact the shortest path metric is doubling, hence every vertex will participate in a small number of such computations.

\subsection{Analysis}

\paragraph{Running Time.}
Fix some $\D=(1+\eps)^i$. Computing the $(2\eps\cdot\Delta,\frac{\eps}{2}\cdot\Delta)$-net $N_i$ takes $(\sqrt{n}+D)\cdot2^{\tilde{O}(\sqrt{\log n})}$ rounds. We next analyze the running time of the shortest paths computation from all net points in $N_i$, which is conducted in parallel.
By \lemmaref{lem:doubling_packing} we have that the number of net points $N_i$ in any ball of radius $2\D$ is $\eps^{-O(\ddim)}$. This suggests that for any $v\in V$ and any hopset edge $(x,y)\in F$, there are at most $\eps^{-O(\ddim)}$ sources of shortest paths computations that will reach them.
Thus the number of rounds required to implement in parallel all approximate shortest path computations is $O\left((\sqrt{n}+D)\cdot\beta\cdot\eps^{-O(\ddim)}\right)=(\sqrt{n}+D)\cdot\eps^{-\tilde{O}(\sqrt{\log n}+\ddim)}$. This is proportional to the time required to compute the hopset, and as there are $O(\log L)=O(\log n)$ different scales, the final running time is $(\sqrt{n}+D)\cdot\eps^{-\tilde{O}(\sqrt{\log n}+\ddim)}$.

\paragraph{Stretch Bound.}	
For simplicity, we will prove stretch $1+c\cdot \eps$ for some constant $c$ to be determined later. One can get stretch $1+\eps$ by rescaling $\eps$.
Consider a pair $u,v\in V$ such that $(1+\eps)^{i-1}<d_{G}(u,v)\le(1+\eps)^{i}=\D$.
Assume by induction that every pair $u',v'$ at distance at most $(1+\eps)^{i-1}$ already enjoys stretch at most $1+c\cdot\eps$ in $H$. The base case $i=0$ is trivial since there are no pairs $u'\neq v'$ of distance less than 1.
Let $\tilde{u},\tilde{v}\in N_i$ be net points such that $d_{G}(u,\tilde{u}),d_{G}(v,\tilde{v})\le2\eps\cdot \Delta$.
By the triangle inequality $d_{G}(\tilde{u},\tilde{v})\le (1+4\eps)\cdot\Delta$, and since $\eps<1/8$ we have that $(1+\eps)\cdot(1+4\eps)\cdot\D\le 2\D$, so the $2\D$-bounded shortest path exploration from $\tilde{u}$ must have discovered $\tilde{v}$.
Therefore we added a $1+\eps$ approximate shortest path between $\tilde{u}$ and $\tilde{v}$ to $H$.  Using the induction hypothesis on the pairs $\{u,\tilde{u}\}$ and $\{v,\tilde{v}\}$, we conclude
\begin{align*}
d_{H}\left(v,u\right) & \le d_{H}\left(v,\tilde{v}\right)+d_{H}\left(\tilde{v},\tilde{u}\right)+d_{H}\left(u,\tilde{u}\right)\\
& \le(1+c\eps)\cdot2\epsilon\Delta+(1+\eps)(1+4\epsilon)\Delta+(1+c\eps)\cdot2\epsilon\Delta\\
& \overset{(*)}{<}\frac{1+c\eps}{1+\epsilon}\cdot\Delta\le(1+c\eps)\cdot d_{G}\left(u,v\right)~,
\end{align*}
where the inequality $(*)$ follows for any constant $c\ge 30$, using that $\eps<1/8$.

\paragraph{Lightness bound.}
Let $n_i=|N_i|$.
In \cite{FN18} the following claim was shown.
\begin{claim}[\cite{FN18}]\label{clm:FN18}
	Consider a weighted graph $G$ with MST of weight $L$, such that there is an $r$-separated set $N$. Then $|N|\le\left\lceil\frac{2L}{r}\right\rceil $.
\end{claim}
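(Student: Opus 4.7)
The plan is to use the Eulerian traversal $\mL$ of the MST (the object from \sectionref{sec:traversal}) as a ``yardstick'' to bound the cardinality of any $r$-separated set. Let $T$ be the MST of $G$, of total weight $L$, and let $\mL$ be any Eulerian traversal of $T$, whose total length is exactly $2L$ since every MST edge is traversed twice.

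First I would order the net points $N=\{v_1,v_2,\dots,v_k\}$ according to their first appearance in $\mL$, at times $0\le t_1<t_2<\dots<t_k\le 2L$. For consecutive indices $i$, the sub-traversal of $\mL$ between time $t_i$ and time $t_{i+1}$ is in particular a walk in $T$ from $v_i$ to $v_{i+1}$, so
\[
t_{i+1}-t_i \;\ge\; d_T(v_i,v_{i+1}) \;\ge\; d_G(v_i,v_{i+1}) \;>\; r,
\]
where the second inequality holds because $T$ is a subgraph of $G$, and the strict last inequality uses the $r$-separation of $N$.

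Summing the $k-1$ inequalities and telescoping,
\[
(k-1)\cdot r \;<\; \sum_{i=1}^{k-1}(t_{i+1}-t_i) \;=\; t_k-t_1 \;\le\; 2L,
\]
so $k-1 < 2L/r$. Since $k-1$ is an integer, this yields $k-1\le \lceil 2L/r\rceil-1$, i.e., $|N|=k\le\lceil 2L/r\rceil$, as claimed.

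The argument is essentially a one-line packing observation, so there is no real obstacle; the only point worth being careful about is distinguishing strict from non-strict inequalities (the separation condition is strict, which is exactly what lets the bound be $\lceil 2L/r\rceil$ rather than $\lceil 2L/r\rceil+1$). Nothing about the doubling dimension is used here: this is a purely MST-based packing bound that will be combined elsewhere in \sectionref{sec:doubling} with \lemmaref{lem:doubling_packing} to obtain the final lightness estimate.
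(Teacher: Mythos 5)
Your proof is correct: the paper itself imports this claim from \cite{FN18} without reproving it, and your argument via the Eulerian tour of the MST (consecutive first appearances of net points along the tour are more than $r$ apart, and the tour has length $2L$) is exactly the standard packing argument behind it, with the ceiling arithmetic handled correctly. No gaps.
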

It follows that $n_i=O(\frac{L}{\eps\Delta})$.
Recall that \lemmaref{lem:doubling_packing} implies that the number of net points $N_i$ in a ball of radius $2\D$ is $\eps^{-O(\ddim)}$. So for every net point $u\in N_i$ we added at most $\eps^{-O(\ddim)}$ paths of weight at most $2\Delta$ each. Thus the total weight of edges added for the $i$'th scale is bounded by $n_i\cdot \eps^{-O(\ddim)}\cdot 2\Delta=\eps^{-O(\ddim)}\cdot L$.
In particular, the sum of weights of the edges added in the $2\log_{1+\eps}n$ scales $i\in\{\log_{1+\eps}\frac {L}{n^2},\dots, \log_{1+\eps}L\}$ is bounded by $\eps^{-O(\ddim)}\cdot\log n\cdot L$. The contribution of all the other scales is negligible as all these scales adds at most $n^2$ edges of weight less than $\frac{L}{n^2}$. The bound on the lightness follows.

\paragraph{Sparsity bound.}
Consider a vertex $v\in V$ and the set of edges added for scale $\D=(1+\eps)^i$. Consider the ball $B$ of radius $2\D$ around $v$, recall that $|B\cap N_i|=\eps^{-O(\ddim)}$. We added a path to the spanner only between net points of distance at most $2\D$. Therefore $v$ can participate in such paths only when both net points are in $B$ (as otherwise the length of a path going through $v$ will be greater than $2\D$). Therefore $v$ might participate in at most $(\eps^{-O(\ddim)})^2=\eps^{-O(\ddim)}$ such paths. As we added at most $2$ edges touching $v$ per path, the number of edges added which are incident on $v$ is bounded by $\eps^{-O(\ddim)}$.
As there are $\log_{1+\eps}L$ scales (and $n$ vertices), we can bound the total number of edges added to the spanner by $n\cdot \eps^{-O(\ddim)}\cdot\log n$ (recall we assumed $G$ has diameter $\poly(n)$, thus also $L=\poly(n)$).

\section{Lower Bounds}\label{sec:lower}

In a seminal paper, Das Sarma et al. \cite{SHKKNPPW12} showed that approximating the weight of an MST up to polynomial factors takes $\tilde{\Omega}(\sqrt{n})$ rounds. As both an SLT and a light spanner provide such an approximation to the weight of the MST, we conclude:

\begin{theorem}\label{thm:LB}
	Every distributed algorithm in the CONGEST model that computes an SLT or a spanner with polynomial lightness, takes $\tilde{\Omega}(\sqrt{n}+D)$ rounds.
\end{theorem}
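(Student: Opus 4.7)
The plan is to reduce MST-weight approximation to the construction of either a light spanner or an SLT, and then invoke the $\tilde{\Omega}(\sqrt{n}+D)$ lower bound of Das Sarma et al.~\cite{SHKKNPPW12} for any polynomial approximation of the MST weight. Concretely, suppose there is a CONGEST algorithm $\mathcal{A}$ that in $T(n,D)$ rounds produces either a spanner $H$ of $G$ with lightness at most $\beta=\poly(n)$, or an $(\alpha,\beta)$-SLT rooted at some arbitrary vertex. In either case, $H$ is a connected subgraph containing a spanning tree, so its total weight satisfies $w(\MST)\le w(H)\le \beta\cdot w(\MST)$.

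Once $\mathcal{A}$ terminates, each vertex $v$ knows the set of edges of $H$ incident on it. The total weight $w(H)$ can then be computed, for instance, by summing $\frac{1}{2}\sum_{e\ni v}w(e)$ over all $v$ via aggregation on a BFS tree $\tau$ of $G$, which takes $O(D)$ additional rounds. A designated vertex (say the root of $\tau$) learns $w(H)$, which is a $\beta$-approximation of $w(\MST)$, and broadcasts this value in another $O(D)$ rounds if desired. Thus the combined algorithm produces a $\poly(n)$-approximation of the MST weight in $T(n,D)+O(D)$ rounds.

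By the lower bound of~\cite{SHKKNPPW12}, any such approximation requires $\tilde{\Omega}(\sqrt{n}+D)$ rounds, hence $T(n,D)=\tilde{\Omega}(\sqrt{n}+D)$. The $\Omega(D)$ term already follows since any algorithm that decides edge inclusion based on global graph structure needs at least diameter time; the nontrivial ingredient is the $\tilde{\Omega}(\sqrt{n})$ lower bound from~\cite{SHKKNPPW12}, which holds already on graphs of hop-diameter $O(\log n)$.

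The only minor subtlety to check is that the reduction preserves the promise that the lightness is polynomial: the bound on $w(H)/w(\MST)$ can indeed be any fixed polynomial in $n$, since Das Sarma et al.\ rule out any polynomial-factor approximation. The SLT case is immediate from the same reasoning because an $(\alpha,\beta)$-SLT is by definition a spanning tree of lightness $\beta$. No further structural properties of spanners or SLTs are needed, which makes the reduction a one-liner modulo the $O(D)$-round aggregation for $w(H)$; the main (and already established) obstacle is of course the original $\tilde{\Omega}(\sqrt{n})$ bound on MST-weight approximation, which we simply cite.
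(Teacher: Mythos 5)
Your reduction is exactly the argument the paper uses: a light spanner or SLT is a connected spanning subgraph whose weight is a $\poly(n)$-approximation of $w(\MST)$, so the $\tilde{\Omega}(\sqrt{n}+D)$ bound of Das Sarma et al.\ transfers directly. The paper states this in one sentence; you have merely filled in the (correct) $O(D)$-round aggregation details.
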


Next we argue that computing nets takes $\tilde{\Omega}(\sqrt{n}+D)$ rounds as well. Our lower bound is for general graphs. Therefore, it is possible that computing nets for graphs with bounded doubling dimension can be performed faster. Nevertheless, we conclude that \Cref{lem:net} is tight (up to lower order terms).

\begin{theorem}\label{thm:netLB}
	Suppose that there is a distributed algorithm in the CONGEST model that for every parameter $\D$ computes an $(\alpha\cdot\D,\D)$-net~ for some $1\le\alpha\le\poly(n)$. Then the algorithm runs in $\tilde{\Omega}(\sqrt{n}+D)$ rounds.
\end{theorem}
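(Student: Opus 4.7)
The plan is to reduce from polynomial-factor MST-weight approximation, just as in \theoremref{thm:LB}. Suppose there is a CONGEST algorithm $\mathcal A$ that, on input $\Delta$, produces an $(\alpha\Delta,\Delta)$-net in $T$ rounds, with $\alpha\le\poly(n)$. Using $\mathcal A$ plus $O(D)$ extra rounds to convergecast $|N|$ along the BFS tree $\tau$, I will extract a $\poly(n)$-factor estimate of the MST weight $L$; by the Das Sarma et al.\ lower bound~\cite{SHKKNPPW12} already invoked in \theoremref{thm:LB}, this forces $T=\tilde\Omega(\sqrt n + D)$.

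The estimate will come from sandwiching $L$ between two expressions in $|N|$. In one direction, \Cref{clm:FN18} directly gives $|N|\le \lceil 2L/\Delta\rceil$, i.e.\ $L \ge (|N|-1)\Delta/2$. In the other direction, let $R$ be the weighted diameter of $G$ and $P$ a shortest path of length $R$. Since $P$ is a geodesic, distances along $P$ coincide with $G$-distances, so a single net point $x$ covers at most a length-$2\alpha\Delta$ subpath of $P$: any two points of $P$ lying within $G$-distance $\alpha\Delta$ of $x$ are within $2\alpha\Delta$ of each other along $P$. The $\alpha\Delta$-covering property therefore yields $|N|\ge R/(2\alpha\Delta)$. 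Combined with the MST cut property -- the heaviest MST edge weighs at most $R$, so $L\le (n-1)R$ -- this gives $L\le 2(n-1)\alpha\Delta\cdot|N|$. Comparing the two bounds, $|N|\cdot\Delta$ is an $O(n\alpha)=\poly(n)$ approximation of $L$ whenever $|N|\ge 2$.

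To guarantee $|N|\ge 2$, the plan is to invoke $\mathcal A$ with $\Delta=1$ on instances where $L\ge 4(n-1)\alpha$, which fits the Das Sarma hard family after scaling (their construction allows $L$ to be any target polynomial in $n$). This yields the desired MST-weight approximation algorithm and, by contradiction, the claimed round lower bound. The main obstacle will be establishing the geodesic covering inequality $|N|\ge R/(2\alpha\Delta)$: it crucially uses that $P$ is a shortest path, not an arbitrary path, so that the subset of $P$ covered by any single ball of radius $\alpha\Delta$ is an interval of bounded length along $P$. The remaining ingredients -- the cut-property bound $L\le (n-1)R$, the $O(D)$-round convergecast of $|N|$ on $\tau$, and the final appeal to Das Sarma et al.\ -- will be routine and will mirror \theoremref{thm:LB}.
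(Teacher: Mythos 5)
Your reduction to MST-weight approximation is the right high-level idea (and matches the paper's), but the key inequality you yourself flag as ``the main obstacle'' --- $|N|\ge R/(2\alpha\Delta)$ for a diameter-realizing shortest path $P$ --- is genuinely false, and the gap cannot be patched within a single-scale argument. The covering property only constrains the net relative to the \emph{vertices} of $P$, not to the continuum of points along $P$. If $P$ contains an edge of weight $W\gg\alpha\Delta$, one net point near each endpoint of that edge covers both sides, and no net points are ``charged'' for the length $W$ in between. Concretely, take two unit-weight paths on $n/2$ vertices joined by a single edge of weight $W=n^{10}$: here $R\approx W$, an $(\alpha,1)$-net has at most $n$ points, yet $R/(2\alpha)\approx W/(2\alpha)\gg n$ for $\alpha=O(1)$. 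The same example shows the deeper problem: the quantity $|N|\cdot\Delta$ for any single scale $\Delta$ cannot upper-bound $L$ within a fixed polynomial factor, because $L$ may be dominated by a few heavy edges that are invisible at scale $\Delta$ (the separation bound of \Cref{clm:FN18}, which gives your other direction, is fine). Choosing $\Delta=1$ and rescaling does not help, since the ratio $L/(n\alpha|N|\Delta)$ grows with $W$, which is not under your control even within the $\poly(n)$-weight convention.

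The paper's proof avoids exactly this trap by invoking the net algorithm at \emph{all} $O(\log n)$ scales $\Delta=2^i$ and outputting $\Psi=\sum_i n_i\cdot\alpha\cdot 2^{i+1}$. The upper bound $\Psi\le O(\alpha\log n)\cdot L$ is your \Cref{clm:FN18} direction summed over scales; the lower bound $L\le\Psi$ is obtained not by a covering/packing count along a geodesic, but by explicitly exhibiting a connected spanning subgraph of weight at most $\Psi$ (connect each point of $N_i$ to its nearest point of $N_{i+1}$ by a shortest path of length at most $\alpha\cdot 2^{i+1}$; connectivity follows since $N_{-\lceil\log\alpha\rceil}=V$ and the top-scale net is a single point). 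The multi-scale sum is what captures the heavy edges: they are paid for at the scale $2^i$ comparable to their weight, where the relevant net still has more than one point. You would need to replace your geodesic step with something of this flavor; as written, the argument does not go through. (Your auxiliary claims --- every MST edge weighs at most $R$ by the cut property, hence $L\le(n-1)R$, and the $O(D)$-round convergecast of $|N|$ --- are correct but moot given the broken step.)
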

\begin{proof}
	Let $G$ be a graph from the family \cite{SHKKNPPW12} used for their lower bound for approximating the weight of the MST. The only property we will use is that $G$ has polynomial diameter $\Lambda$. (W.l.o.g. the minimal distance is 1, and $\Lambda$ is the largest distance in $G$.)
	We will create nets in all the distance scales, and use their cardinality in order to provide a polynomial approximation to the weight of the MST.
	As there are only $O(\log n)$ distance scales, and the cardinality of all the nets can be computed in $O(D+ \log n)$ time, the lower will follow.
	
	For every $i\ge-\lceil\log \alpha\rceil$, compute an $(\alpha\cdot 2^i, 2^i)$-net~ $N_i$. We stop in the first time that $|N_i|=1$. Note that we compute at most $O(\log n)$ nets, since $\log\alpha=O(\log n)$, and when $2^i$ is larger than $\Lambda=\poly(n)$, there can be only a single net point.
	Next, we compute the cardinality of all the net points and return $\Psi=\sum_{i}n_{i}\cdot\alpha\cdot2^{i+1}$ where $n_i=|N_i|$. To finish the proof, we will argue that $L\le\Psi\le O(\alpha\cdot\log n)\cdot L$ (where $L$ is the weight of the MST).

	For every $i$, as $N_i$ is $2^i$-separated, by \Cref{clm:FN18} $n_i\le \left\lceil \frac{2L}{2^{i}}\right\rceil $.
	In particular, $\Psi=\sum_{i}n_{i}\cdot\alpha\cdot2^{i+1}\le\alpha\cdot\sum_{i}\left\lceil \frac{2L}{2^{i}}\right\rceil\cdot2^{i+1}=O(\alpha\cdot\log n)\cdot L$.
	
	For the second inequality, we will construct a connected subgraph $H$ of $G$. For every net point $x\in N_i$, let $y\in N_{i+1}$ be the closest point to $x$ among points in $N_{i+1}$. Then by the covering property $d_G(x,y)\le \alpha\cdot 2^{i+1}$. Add to $H$ the shortest path from $x$ to $y$. Do this for all the nets. Note that $N_{-\lceil\log \alpha\rceil}=V$ (since any point can cover only itself), and that the maximal net consist of a single point. Therefore $H$ is connected.
	We conclude $L\le w(H)\le\sum_{i}n_{i}\cdot\alpha\cdot2^{i+1}=\Psi$.		
\end{proof}

\bibliographystyle{alphaurlinit}
\bibliography{DistSpan}

\newcommand{\etalchar}[1]{$^{#1}$}
\begin{thebibliography}{KKM{\etalchar{+}}12}

\bibitem[AB16]{AB16}
A.~Abboud and G.~Bodwin.
\newblock The 4/3 additive spanner exponent is tight.
\newblock In {\em Proceedings of the 48th Annual {ACM} {SIGACT} Symposium on
  Theory of Computing, {STOC} 2016, Cambridge, MA, USA, June 18-21, 2016},
  pages 351--361,
\newblock 2016, \href {http://dx.doi.org/10.1145/2897518.2897555}
  {\path{doi:10.1145/2897518.2897555}}.

\bibitem[ABP90]{ABP90}
B.~Awerbuch, A.~E. Baratz, and D.~Peleg.
\newblock Cost-sensitive analysis of communication protocols.
\newblock In {\em Proceedings of the Ninth Annual {ACM} Symposium on Principles
  of Distributed Computing, Quebec City, Quebec, Canada, August 22-24, 1990},
  pages 177--187,
\newblock 1990, \href {http://dx.doi.org/10.1145/93385.93417}
  {\path{doi:10.1145/93385.93417}}.

\bibitem[ABP92]{ABP92}
B.~Awerbuch, A.~E. Baratz, and D.~Peleg.
\newblock Efficient broadcast and light-weight spanners.
\newblock Technical Report CS92-22, The Weizmann Institute of Science, Rehovot,
  Israel.,
\newblock 1992.

\bibitem[ACIM99]{ACIM99}
D.~Aingworth, C.~Chekuri, P.~Indyk, and R.~Motwani.
\newblock Fast estimation of diameter and shortest paths (without matrix
  multiplication).
\newblock {\em {SIAM} J. Comput.}, 28(4):1167--1181,
\newblock 1999, \href {http://dx.doi.org/10.1137/S0097539796303421}
  {\path{doi:10.1137/S0097539796303421}}.

\bibitem[ADD{\etalchar{+}}93]{ADDJS93}
I.~Alth{\"{o}}fer, G.~Das, D.~P. Dobkin, D.~Joseph, and J.~Soares.
\newblock On sparse spanners of weighted graphs.
\newblock {\em Discrete {\&} Computational Geometry}, 9:81--100,
\newblock 1993, \href {http://dx.doi.org/10.1007/BF02189308}
  {\path{doi:10.1007/BF02189308}}.

\bibitem[ADF{\etalchar{+}}17]{ADFSW17}
S.~Alstrup, S.~Dahlgaard, A.~Filtser, M.~St{\"{o}}ckel, and C.~Wulff{-}Nilsen.
\newblock Constructing light spanners deterministically in near-linear time.
\newblock {\em CoRR}, abs/1709.01960,
\newblock 2017, \href {http://arxiv.org/abs/1709.01960}
  {\path{arXiv:1709.01960}}.

\bibitem[AGLP89]{AGLP89}
B.~Awerbuch, A.~V. Goldberg, M.~Luby, and S.~A. Plotkin.
\newblock Network decomposition and locality in distributed computation.
\newblock In {\em 30th Annual Symposium on Foundations of Computer Science,
  Research Triangle Park, North Carolina, USA, 30 October - 1 November 1989},
  pages 364--369,
\newblock 1989, \href {http://dx.doi.org/10.1109/SFCS.1989.63504}
  {\path{doi:10.1109/SFCS.1989.63504}}.

\bibitem[Ass83]{Ass83}
P.~Assouad.
\newblock Plongements lipschitziens dans $\mathbb{R}^n$.
\newblock {\em Bull. Soc. Math. France}, 111(4):429--448, 1983.
\newblock
\newblock \url{http://eudml.org/doc/87452}.

\bibitem[Awe85]{A85}
B.~Awerbuch.
\newblock Complexity of network synchronization.
\newblock {\em J. ACM}, 32(4):804--823,
\newblock October 1985, \href {http://dx.doi.org/10.1145/4221.4227}
  {\path{doi:10.1145/4221.4227}}.

\bibitem[BDS04]{SDS04}
Y.~Ben{-}Shimol, A.~Dvir, and M.~Segal.
\newblock {SPLAST:} a novel approach for multicasting in mobile wireless ad hoc
  networks.
\newblock In {\em Proceedings of the {IEEE} 15th International Symposium on
  Personal, Indoor and Mobile Radio Communications, {PIMRC} 2004, 5-8 September
  2004, Barcelona, Spain}, pages 1011--1015,
\newblock 2004, \href {http://dx.doi.org/10.1109/PIMRC.2004.1373851}
  {\path{doi:10.1109/PIMRC.2004.1373851}}.

\bibitem[BEPS12]{BEPS12}
L.~Barenboim, M.~Elkin, S.~Pettie, and J.~Schneider.
\newblock The locality of distributed symmetry breaking.
\newblock In {\em 53rd Annual {IEEE} Symposium on Foundations of Computer
  Science, {FOCS} 2012, New Brunswick, NJ, USA, October 20-23, 2012}, pages
  321--330,
\newblock 2012, \href {http://dx.doi.org/10.1109/FOCS.2012.60}
  {\path{doi:10.1109/FOCS.2012.60}}.

\bibitem[BFN16]{BFN16}
Y.~Bartal, A.~Filtser, and O.~Neiman.
\newblock On notions of distortion and an almost minimum spanning tree with
  constant average distortion.
\newblock In {\em SODA 2016}, pages 873--882, 2016.
\newblock
\newblock The full version of this paper (containing the mentioned reduction)
  appears in \url{http://arxiv.org/abs/1609.08801}.

\bibitem[BKKL17]{BKKL17}
R.~Becker, A.~Karrenbauer, S.~Krinninger, and C.~Lenzen.
\newblock Near-optimal approximate shortest paths and transshipment in
  distributed and streaming models.
\newblock In {\em 31st International Symposium on Distributed Computing, {DISC}
  2017, October 16-20, 2017, Vienna, Austria}, pages 7:1--7:16,
\newblock 2017, \href {http://dx.doi.org/10.4230/LIPIcs.DISC.2017.7}
  {\path{doi:10.4230/LIPIcs.DISC.2017.7}}.

\bibitem[BLW17]{BLW17}
G.~Borradaile, H.~Le, and C.~Wulff{-}Nilsen.
\newblock Minor-free graphs have light spanners.
\newblock In {\em 58th {IEEE} Annual Symposium on Foundations of Computer
  Science, {FOCS} 2017, Berkeley, CA, USA, October 15-17, 2017}, pages
  767--778,
\newblock 2017, \href {http://dx.doi.org/10.1109/FOCS.2017.76}
  {\path{doi:10.1109/FOCS.2017.76}}.

\bibitem[BLW19]{BLW19}
G.~Borradaile, H.~Le, and C.~Wulff{-}Nilsen.
\newblock Greedy spanners are optimal in doubling metrics.
\newblock In {\em Proceedings of the Thirtieth Annual {ACM-SIAM} Symposium on
  Discrete Algorithms, {SODA} 2019, San Diego, California, USA, January 6-9,
  2019}, pages 2371--2379,
\newblock 2019, \href {http://dx.doi.org/10.1137/1.9781611975482.145}
  {\path{doi:10.1137/1.9781611975482.145}}.

\bibitem[BS07]{BS07}
S.~Baswana and S.~Sen.
\newblock A simple and linear time randomized algorithm for computing sparse
  spanners in weighted graphs.
\newblock {\em Random Struct. Algorithms}, 30(4):532--563,
\newblock 2007, \href {http://dx.doi.org/10.1002/rsa.20130}
  {\path{doi:10.1002/rsa.20130}}.

\bibitem[CDNS95]{CDNS95}
B.~Chandra, G.~Das, G.~Narasimhan, and J.~Soares.
\newblock New sparseness results on graph spanners.
\newblock {\em Int. J. Comput. Geometry Appl.}, 5:125--144,
\newblock 1995, \href {http://dx.doi.org/10.1142/S0218195995000088}
  {\path{doi:10.1142/S0218195995000088}}.

\bibitem[Coh97]{Coh97}
E.~Cohen.
\newblock Size-estimation framework with applications to transitive closure and
  reachability.
\newblock {\em J. Comput. Syst. Sci.}, 55(3):441--453,
\newblock 1997, \href {http://dx.doi.org/10.1006/jcss.1997.1534}
  {\path{doi:10.1006/jcss.1997.1534}}.

\bibitem[Coh98]{Coh98}
E.~Cohen.
\newblock Fast algorithms for constructing t-spanners and paths with stretch t.
\newblock {\em {SIAM} J. Comput.}, 28(1):210--236,
\newblock 1998, \href {http://dx.doi.org/10.1137/S0097539794261295}
  {\path{doi:10.1137/S0097539794261295}}.

\bibitem[CW18]{CW18}
S.~Chechik and C.~Wulff{-}Nilsen.
\newblock Near-optimal light spanners.
\newblock {\em {ACM} Trans. Algorithms}, 14(3):33:1--33:15,
\newblock 2018, \href {http://dx.doi.org/10.1145/3199607}
  {\path{doi:10.1145/3199607}}.

\bibitem[DGPV08]{DGPV08}
B.~Derbel, C.~Gavoille, D.~Peleg, and L.~Viennot.
\newblock On the locality of distributed sparse spanner construction.
\newblock In {\em Proceedings of the Twenty-Seventh Annual {ACM} Symposium on
  Principles of Distributed Computing, {PODC} 2008, Toronto, Canada, August
  18-21, 2008}, pages 273--282,
\newblock 2008, \href {http://dx.doi.org/10.1145/1400751.1400788}
  {\path{doi:10.1145/1400751.1400788}}.

\bibitem[DHN93]{DHN93}
G.~Das, P.~J. Heffernan, and G.~Narasimhan.
\newblock Optimally sparse spanners in 3-dimensional euclidean space.
\newblock In {\em Proceedings of the Ninth Annual Symposium on Computational
  GeometrySan Diego, CA, USA, May 19-21, 1993}, pages 53--62,
\newblock 1993, \href {http://dx.doi.org/10.1145/160985.160998}
  {\path{doi:10.1145/160985.160998}}.

\bibitem[DPP06]{DPP06}
M.~Damian, S.~Pandit, and S.~Pemmaraju.
\newblock Distributed spanner construction in doubling metric spaces.
\newblock In {\em Proceedings of the 10th International Conference on
  Principles of Distributed Systems}, OPODIS'06, pages 157--171, Berlin,
  Heidelberg, 2006.
\newblock Springer-Verlag, \href {http://dx.doi.org/10.1007/11945529_12}
  {\path{doi:10.1007/11945529_12}}.

\bibitem[Elk04]{E04}
M.~Elkin.
\newblock Unconditional lower bounds on the time-approximation tradeoffs for
  the distributed minimum spanning tree problem.
\newblock In {\em Proceedings of the 36th Annual {ACM} Symposium on Theory of
  Computing, Chicago, IL, USA, June 13-16, 2004}, pages 331--340,
\newblock 2004, \href {http://dx.doi.org/10.1145/1007352.1007407}
  {\path{doi:10.1145/1007352.1007407}}.

\bibitem[Elk07]{E06}
M.~Elkin.
\newblock A near-optimal distributed fully dynamic algorithm for maintaining
  sparse spanners.
\newblock In {\em Proceedings of the Twenty-Sixth Annual {ACM} Symposium on
  Principles of Distributed Computing, {PODC} 2007, Portland, Oregon, USA,
  August 12-15, 2007}, pages 185--194,
\newblock 2007, \href {http://dx.doi.org/10.1145/1281100.1281128}
  {\path{doi:10.1145/1281100.1281128}}.

\bibitem[Elk17a]{E17}
M.~Elkin.
\newblock Distributed exact shortest paths in sublinear time.
\newblock In {\em Proceedings of the 49th Annual {ACM} {SIGACT} Symposium on
  Theory of Computing, {STOC} 2017, Montreal, QC, Canada, June 19-23, 2017},
  pages 757--770,
\newblock 2017, \href {http://dx.doi.org/10.1145/3055399.3055452}
  {\path{doi:10.1145/3055399.3055452}}.

\bibitem[Elk17b]{Elk17}
M.~Elkin.
\newblock A simple deterministic distributed {MST} algorithm, with near-optimal
  time and message complexities.
\newblock In {\em Proceedings of the {ACM} Symposium on Principles of
  Distributed Computing, {PODC} 2017, Washington, DC, USA, July 25-27, 2017},
  pages 157--163,
\newblock 2017, \href {http://dx.doi.org/10.1145/3087801.3087823}
  {\path{doi:10.1145/3087801.3087823}}.

\bibitem[EN16]{EN16}
M.~Elkin and O.~Neiman.
\newblock Hopsets with constant hopbound, and applications to approximate
  shortest paths.
\newblock In {\em {IEEE} 57th Annual Symposium on Foundations of Computer
  Science, {FOCS} 2016, 9-11 October 2016, Hyatt Regency, New Brunswick, New
  Jersey, {USA}}, pages 128--137,
\newblock 2016, \href {http://dx.doi.org/10.1109/FOCS.2016.22}
  {\path{doi:10.1109/FOCS.2016.22}}.

\bibitem[EN17a]{EN17}
M.~Elkin and O.~Neiman.
\newblock Efficient algorithms for constructing very sparse spanners and
  emulators.
\newblock In {\em Proceedings of the Twenty-Eighth Annual {ACM-SIAM} Symposium
  on Discrete Algorithms, {SODA} 2017, Barcelona, Spain, Hotel Porta Fira,
  January 16-19}, pages 652--669,
\newblock 2017, \href {http://dx.doi.org/10.1137/1.9781611974782.41}
  {\path{doi:10.1137/1.9781611974782.41}}.

\bibitem[EN17b]{EN17spanner}
M.~Elkin and O.~Neiman.
\newblock Efficient algorithms for constructing very sparse spanners and
  emulators.
\newblock {\em CoRR}, abs/1607.08337, 2017.
\newblock
\newblock Version 2, \href {http://arxiv.org/abs/1607.08337}
  {\path{arXiv:1607.08337}}.

\bibitem[EN18]{EN18}
M.~Elkin and O.~Neiman.
\newblock Near-optimal distributed routing with low memory.
\newblock In {\em Proceedings of the 2018 ACM Symposium on Principles of
  Distributed Computing}, PODC '18, pages 207--216, New York, NY, USA, 2018.
\newblock ACM, \href {http://dx.doi.org/10.1145/3212734.3212761}
  {\path{doi:10.1145/3212734.3212761}}.

\bibitem[ENS15]{ENS15}
M.~Elkin, O.~Neiman, and S.~Solomon.
\newblock Light spanners.
\newblock {\em {SIAM} J. Discrete Math.}, 29(3):1312--1321,
\newblock 2015, \href {http://dx.doi.org/10.1137/140979538}
  {\path{doi:10.1137/140979538}}.

\bibitem[EP01]{EP01}
M.~Elkin and D.~Peleg.
\newblock (1+epsilon, beta)-spanner constructions for general graphs.
\newblock In {\em Proceedings on 33rd Annual {ACM} Symposium on Theory of
  Computing, July 6-8, 2001, Heraklion, Crete, Greece}, pages 173--182,
\newblock 2001, \href {http://dx.doi.org/10.1145/380752.380797}
  {\path{doi:10.1145/380752.380797}}.

\bibitem[ES16]{ES16}
M.~Elkin and S.~Solomon.
\newblock Fast constructions of lightweight spanners for general graphs.
\newblock {\em {ACM} Trans. Algorithms}, 12(3):29:1--29:21, 2016.
\newblock
\newblock See also SODA'13, \href {http://dx.doi.org/10.1145/2836167}
  {\path{doi:10.1145/2836167}}.

\bibitem[EZ06]{EZ06}
M.~Elkin and J.~Zhang.
\newblock Efficient algorithms for constructing (1+epsilon, beta)-spanners in
  the distributed and streaming models.
\newblock {\em Distributed Computing}, 18(5):375--385,
\newblock 2006, \href {http://dx.doi.org/10.1007/s00446-005-0147-2}
  {\path{doi:10.1007/s00446-005-0147-2}}.

\bibitem[FL16]{FL16}
S.~Friedrichs and C.~Lenzen.
\newblock Parallel metric tree embedding based on an algebraic view on
  moore-bellman-ford.
\newblock In {\em Proceedings of the 28th {ACM} Symposium on Parallelism in
  Algorithms and Architectures, {SPAA} 2016, Asilomar State Beach/Pacific
  Grove, CA, USA, July 11-13, 2016}, pages 455--466,
\newblock 2016, \href {http://dx.doi.org/10.1145/2935764.2935777}
  {\path{doi:10.1145/2935764.2935777}}.

\bibitem[FN18]{FN18}
A.~Filtser and O.~Neiman.
\newblock Light spanners for high dimensional norms via stochastic
  decompositions.
\newblock In {\em 26th Annual European Symposium on Algorithms, {ESA} 2018,
  August 20-22, 2018, Helsinki, Finland}, pages 29:1--29:15,
\newblock 2018, \href {http://dx.doi.org/10.4230/LIPIcs.ESA.2018.29}
  {\path{doi:10.4230/LIPIcs.ESA.2018.29}}.

\bibitem[FS16]{FS16}
A.~Filtser and S.~Solomon.
\newblock The greedy spanner is existentially optimal.
\newblock In {\em PODC 2016}, pages 9--17,
\newblock 2016, \href {http://dx.doi.org/10.1145/2933057.2933114}
  {\path{doi:10.1145/2933057.2933114}}.

\bibitem[Gha16]{Gh16}
M.~Ghaffari.
\newblock An improved distributed algorithm for maximal independent set.
\newblock In {\em Proceedings of the Twenty-Seventh Annual {ACM-SIAM} Symposium
  on Discrete Algorithms, {SODA} 2016, Arlington, VA, USA, January 10-12,
  2016}, pages 270--277,
\newblock 2016, \href {http://dx.doi.org/10.1137/1.9781611974331.ch20}
  {\path{doi:10.1137/1.9781611974331.ch20}}.

\bibitem[GKL03]{GKL03}
A.~Gupta, R.~Krauthgamer, and J.~R. Lee.
\newblock Bounded geometries, fractals, and low-distortion embeddings.
\newblock In {\em 44th Symposium on Foundations of Computer Science {(FOCS}
  2003), 11-14 October 2003, Cambridge, MA, USA, Proceedings}, pages 534--543,
\newblock 2003, \href {http://dx.doi.org/10.1109/SFCS.2003.1238226}
  {\path{doi:10.1109/SFCS.2003.1238226}}.

\bibitem[GL14]{GL14}
M.~Ghaffari and C.~Lenzen.
\newblock Near-optimal distributed tree embedding.
\newblock In {\em Distributed Computing - 28th International Symposium, {DISC}
  2014, Austin, TX, USA, October 12-15, 2014. Proceedings}, pages 197--211,
\newblock 2014, \href {http://dx.doi.org/10.1007/978-3-662-45174-8\_14}
  {\path{doi:10.1007/978-3-662-45174-8\_14}}.

\bibitem[GL18]{GL18}
M.~Ghaffari and J.~Li.
\newblock Improved distributed algorithms for exact shortest paths.
\newblock In {\em Proceedings of the 50th Annual {ACM} {SIGACT} Symposium on
  Theory of Computing, {STOC} 2018, Los Angeles, CA, USA, June 25-29, 2018},
  pages 431--444,
\newblock 2018, \href {http://dx.doi.org/10.1145/3188745.3188948}
  {\path{doi:10.1145/3188745.3188948}}.

\bibitem[Got15]{G15}
L.~Gottlieb.
\newblock A light metric spanner.
\newblock In {\em {IEEE} 56th Annual Symposium on Foundations of Computer
  Science, {FOCS} 2015, Berkeley, CA, USA, 17-20 October, 2015}, pages
  759--772,
\newblock 2015, \href {http://dx.doi.org/10.1109/FOCS.2015.52}
  {\path{doi:10.1109/FOCS.2015.52}}.

\bibitem[HM06]{HPM06}
S.~Har{-}Peled and M.~Mendel.
\newblock Fast construction of nets in low-dimensional metrics and their
  applications.
\newblock {\em {SIAM} J. Comput.}, 35(5):1148--1184,
\newblock 2006, \href {http://dx.doi.org/10.1137/S0097539704446281}
  {\path{doi:10.1137/S0097539704446281}}.

\bibitem[KKM{\etalchar{+}}12]{KKMPT12}
M.~Khan, F.~Kuhn, D.~Malkhi, G.~Pandurangan, and K.~Talwar.
\newblock Efficient distributed approximation algorithms via probabilistic tree
  embeddings.
\newblock {\em Distributed Computing}, 25(3):189--205,
\newblock 2012, \href {http://dx.doi.org/10.1007/s00446-012-0157-9}
  {\path{doi:10.1007/s00446-012-0157-9}}.

\bibitem[Kle05]{K05}
P.~N. Klein.
\newblock A linear-time approximation scheme for planar weighted {TSP}.
\newblock In {\em 46th Annual {IEEE} Symposium on Foundations of Computer
  Science {(FOCS} 2005), 23-25 October 2005, Pittsburgh, PA, USA, Proceedings},
  pages 647--657,
\newblock 2005, \href {http://dx.doi.org/10.1109/SFCS.2005.7}
  {\path{doi:10.1109/SFCS.2005.7}}.

\bibitem[KP98]{KP98}
S.~Kutten and D.~Peleg.
\newblock Fast distributed construction of small \emph{k}-dominating sets and
  applications.
\newblock {\em J. Algorithms}, 28(1):40--66,
\newblock 1998, \href {http://dx.doi.org/10.1006/jagm.1998.0929}
  {\path{doi:10.1006/jagm.1998.0929}}.

\bibitem[KRY95]{KRY95}
S.~Khuller, B.~Raghavachari, and N.~E. Young.
\newblock Balancing minimum spanning trees and shortest-path trees.
\newblock {\em Algorithmica}, 14(4):305--321,
\newblock 1995, \href {http://dx.doi.org/10.1007/BF01294129}
  {\path{doi:10.1007/BF01294129}}.

\bibitem[Lub86]{L86}
M.~Luby.
\newblock A simple parallel algorithm for the maximal independent set problem.
\newblock {\em SIAM J. Comput.}, 15(4):1036--1055,
\newblock November 1986, \href {http://dx.doi.org/10.1137/0215074}
  {\path{doi:10.1137/0215074}}.

\bibitem[MP98]{MP98}
Y.~Mansour and D.~Peleg.
\newblock An approximation algorithm for minimum-cost network design.
\newblock Technical report, Weizmann Institute of Science, Rehovot,
\newblock 1998.

\bibitem[MPVX15]{MPVX15}
G.~L. Miller, R.~Peng, A.~Vladu, and S.~C. Xu.
\newblock Improved parallel algorithms for spanners and hopsets.
\newblock In {\em Proc. of 27th SPAA}, pages 192--201,
\newblock 2015, \href {http://dx.doi.org/10.1145/2755573.2755574}
  {\path{doi:10.1145/2755573.2755574}}.

\bibitem[MRSZ11]{MRSZ11}
Y.~M{\'{e}}tivier, J.~M. Robson, N.~Saheb{-}Djahromi, and A.~Zemmari.
\newblock An optimal bit complexity randomized distributed {MIS} algorithm.
\newblock {\em Distributed Computing}, 23(5-6):331--340,
\newblock 2011, \href {http://dx.doi.org/10.1007/s00446-010-0121-5}
  {\path{doi:10.1007/s00446-010-0121-5}}.

\bibitem[NZ02]{NZ02}
T.~S.~E. Ng and H.~Zhang.
\newblock Predicting internet network distance with coordinates-based
  approaches.
\newblock In {\em 21st Annual Joint Conference of the IEEE Computer and
  Communications Societies (INFOCOM)}, pages 178--187,
\newblock 2002, \href {http://dx.doi.org/10.1109/INFCOM.2002.1019258}
  {\path{doi:10.1109/INFCOM.2002.1019258}}.

\bibitem[Pel00]{P00}
D.~Peleg.
\newblock {\em Distributed Computing: A Locality-Sensitive Approach}.
\newblock Society for Industrial and Applied Mathematics,
\newblock 2000, \href {http://dx.doi.org/10.1137/1.9780898719772}
  {\path{doi:10.1137/1.9780898719772}}.

\bibitem[Pet09]{Pet09}
S.~Pettie.
\newblock Low distortion spanners.
\newblock {\em {ACM} Trans. Algorithms}, 6(1):7:1--7:22,
\newblock 2009, \href {http://dx.doi.org/10.1145/1644015.1644022}
  {\path{doi:10.1145/1644015.1644022}}.

\bibitem[Pet10]{P10}
S.~Pettie.
\newblock Distributed algorithms for ultrasparse spanners and linear size
  skeletons.
\newblock {\em Distributed Computing}, 22(3):147--166,
\newblock 2010, \href {http://dx.doi.org/10.1007/s00446-009-0091-7}
  {\path{doi:10.1007/s00446-009-0091-7}}.

\bibitem[PS89]{PS89}
D.~Peleg and A.~A. Sch{\"{a}}ffer.
\newblock Graph spanners.
\newblock {\em Journal of Graph Theory}, 13(1):99--116,
\newblock 1989, \href {http://dx.doi.org/10.1002/jgt.3190130114}
  {\path{doi:10.1002/jgt.3190130114}}.

\bibitem[PU89]{PU89}
D.~Peleg and J.~D. Ullman.
\newblock An optimal synchronizer for the hypercube.
\newblock {\em {SIAM} J. Comput.}, 18(4):740--747,
\newblock 1989, \href {http://dx.doi.org/10.1137/0218050}
  {\path{doi:10.1137/0218050}}.

\bibitem[PV04]{PV04}
P.~Penna and C.~Ventre.
\newblock Energy-efficient broadcasting in ad-hoc networks: combining msts with
  shortest-path trees.
\newblock In {\em Proceedings of the 1st {ACM} International Workshop on
  Performance Evaluation of Wireless Ad Hoc, Sensor, and Ubiquitous Networks,
  {PE-WASUN} 2004, Venezia, Italy, October 4, 2004}, pages 61--68,
\newblock 2004, \href {http://dx.doi.org/10.1145/1023756.1023769}
  {\path{doi:10.1145/1023756.1023769}}.

\bibitem[SCRS01]{SCRS01}
F.~S. Salman, J.~Cheriyan, R.~Ravi, and S.~Subramanian.
\newblock Approximating the single-sink link-installation problem in network
  design.
\newblock {\em {SIAM} Journal on Optimization}, 11(3):595--610,
\newblock 2001, \href {http://dx.doi.org/10.1137/S1052623497321432}
  {\path{doi:10.1137/S1052623497321432}}.

\bibitem[SEW13]{SEW13}
J.~Schneider, M.~Elkin, and R.~Wattenhofer.
\newblock Symmetry breaking depending on the chromatic number or the
  neighborhood growth.
\newblock {\em Theor. Comput. Sci.}, 509:40--50,
\newblock 2013, \href {http://dx.doi.org/10.1016/j.tcs.2012.09.004}
  {\path{doi:10.1016/j.tcs.2012.09.004}}.

\bibitem[SHK{\etalchar{+}}12]{SHKKNPPW12}
A.~D. Sarma, S.~Holzer, L.~Kor, A.~Korman, D.~Nanongkai, G.~Pandurangan,
  D.~Peleg, and R.~Wattenhofer.
\newblock Distributed verification and hardness of distributed approximation.
\newblock {\em {SIAM} J. Comput.}, 41(5):1235--1265,
\newblock 2012, \href {http://dx.doi.org/10.1137/11085178X}
  {\path{doi:10.1137/11085178X}}.

\bibitem[TSL00]{TSL00}
J.~B. Tenenbaum, V.~d. Silva, and J.~C. Langford.
\newblock A global geometric framework for nonlinear dimensionality reduction.
\newblock {\em Science}, 290(5500):2319--2323,
\newblock 2000, \href {http://dx.doi.org/10.1126/science.290.5500.2319}
  {\path{doi:10.1126/science.290.5500.2319}}.

\bibitem[TZ06]{TZ06}
M.~Thorup and U.~Zwick.
\newblock Spanners and emulators with sublinear distance errors.
\newblock In {\em Proceedings of the Seventeenth Annual {ACM-SIAM} Symposium on
  Discrete Algorithms, {SODA} 2006, Miami, Florida, USA, January 22-26, 2006},
  pages 802--809, 2006.
\newblock
\newblock \url{http://dl.acm.org/citation.cfm?id=1109557.1109645}.

\bibitem[WCT02]{WCT02}
B.~Y. Wu, K.-M. Chao, and C.~Y. Tang.
\newblock Light graphs with small routing cost.
\newblock {\em Networks}, 39(3):130--138,
\newblock 2002, \href {http://dx.doi.org/10.1002/net.10019}
  {\path{doi:10.1002/net.10019}}.

\bibitem[YCC06]{YCC06}
J.~Yu, L.~Chen, and G.~Chen.
\newblock Priority based overlay multicast with filtering mechanism for
  distributed interactive applications.
\newblock In {\em 10th {IEEE} International Symposium on Distributed Simulation
  and Real-Time Applications {(DS-RT} 2006), 2-4 October 2006, Malaga, Spain},
  pages 127--134,
\newblock 2006, \href {http://dx.doi.org/10.1109/DS-RT.2006.29}
  {\path{doi:10.1109/DS-RT.2006.29}}.

\end{thebibliography}

\end{document}